\documentclass[5p]{elsarticle}
\usepackage{flushend} 
\usepackage{romannum}
\usepackage{mathtools}
\pdfminorversion=4
\usepackage{cuted}
\usepackage{romannum}
\usepackage[usenames,dvipsnames]{xcolor}
\usepackage{tabularx,booktabs}
\usepackage{setspace}
\usepackage{epstopdf}

\newcommand{\boundellipse}[3]
{(#1) ellipse (#2 and #3)
}

\usepackage{setspace}
\usepackage{calrsfs}
\DeclareMathAlphabet{\pazocal}{OMS}{zplm}{m}{n}
\usepackage{graphicx,dblfloatfix}
\usepackage{graphics} 
\usepackage{epsfig} 
\usepackage{mathptmx}
\usepackage{caption}
\usepackage{subcaption}
\usepackage{times} 
\usepackage{tikz}
\usetikzlibrary{positioning}
\usepackage{amsmath, amssymb}

\usepackage{amsthm}
\usepackage{amsfonts} 
\usepackage{setspace}
\usepackage{scalefnt}
\usepackage{multirow}
\usepackage{textcomp}
\usepackage[english]{babel}
\usepackage{float} 
\usepackage{algorithmicx}
\usepackage[section]{algorithm}
\usepackage{algpascal}
\usepackage{algc}
\usepackage{algcompatible}
\usepackage{algpseudocode}
\let\oldReturn\Return
\renewcommand{\Return}{\State\oldReturn}
\usepackage{mathtools}
\usepackage{bm}
\usepackage{mathptmx}
\usepackage{times} 
\usepackage{mathtools, cuted}
\usepackage{textcomp}
\usepackage[english]{babel}
\usepackage{rotating}
\usepackage{pgfplots}
\pgfplotsset{compat=1.5}
\usepackage{pgfplotstable}
\usepackage{filecontents}
\usepackage{microtype}

\newenvironment{example}
    {\it{Subject to:~}
    \begin{minipage}[t]{0.8\linewidth}}
    {\end{minipage}}
    
\newtheorem{theorem}{Theorem}[section]
\newtheorem{prob}[theorem]{Problem}
\newtheorem{lemma}[theorem]{Lemma}

\newtheorem{assume}[theorem]{Assumption}

\newtheorem{defn}[theorem]{Definition}
\newtheorem{cor}[theorem]{Corollary}

\newtheorem{prop}[theorem]{Proposition}

\newcommand{\A}{\pazocal{A}}

\renewcommand{\S}{\pazocal{S}}

\newcommand{\bp}{{\bf P}}

\newcommand{\D}{\pazocal{D}}

\newcommand{\bS}{{\bf S}}

\newcommand{\sA}{{\scriptscriptstyle{A}}}
\newcommand{\sB}{{\scriptscriptstyle{B}}}
\newcommand{\sD}{{\scriptscriptstyle{D}}}

\newcommand{\G}{{\pazocal{G}}}
\newcommand{\F}{\pazocal{F}}

\graphicspath{{figures/}}



\def \*{\star}
\def \10n{\!\!\!\!\!\!\!\!\!\!}

\graphicspath{{figures/}}
\usepackage[utf8]{inputenc}

         \DeclareMathAlphabet{\mathscr}{U}{BOONDOX-cal}{m}{n}
         \SetMathAlphabet{\mathscr}{bold}{U}{BOONDOX-cal}{b}{n}
         \DeclareMathAlphabet{\mathbscr} {U}{BOONDOX-cal}{b}{n}

\begin{document}\sloppy
\begin{frontmatter}
\title{\LARGE \bf Stochastic Dynamic Information Flow Tracking Game using\\ Supervised Learning for Detecting  Advanced Persistent Threats}
\author[uw]{Shana Moothedath}
\ead{sm15@uw.edu}
\author[uw]{Dinuka Sahabandu}
\ead{sdinuka@uw.edu}
\author[gatech]{Joey Allen}
\ead{jallen309@gatech.edu}
\author[uw]{Linda Bushnell}
\ead{lb2@uw.edu}
\author[gatech]{Wenke Lee}
\ead{wenke@cc.gatech.edu}
\author[gatech]{Radha Poovendran}
\ead{rp3@uw.edu}
\address[uw]{Department of Electrical and Computer Engineering, University of Washington,  USA}
\address[gatech]{College of Computing, Georgia Institute of Technology, USA}
\begin{keyword}
Cyber security, Stochastic games, Neural network, Advanced persistent threats, Dynamic information flow tracking
\end{keyword}
\begin{abstract}
Advanced persistent threats (APTs) are organized  prolonged cyberattacks by sophisticated  attackers with the intend of stealing critical information. Although APT activities are stealthy and evade detection by  traditional detection tools, they interact with the system components  to make progress in the attack.  These interactions lead to information flows that are recorded in the form of a system log.  Dynamic Information Flow Tracking (DIFT) has been proposed as one of the effective ways to detect APTs using the  information flows. A DIFT-based detection mechanism  dynamically performs security analysis on the information flows  to detect possible attacks. However, wide range security analysis using DIFT  results in a significant increase in performance overhead and high rates of false-positives and false-negatives.  In this paper, we model the strategic interaction between APT and DIFT as a non-cooperative stochastic game. The game unfolds on a state space constructed from an information flow graph (IFG) that is extracted from the system log. The objective of the APT in the game is to  choose transitions in the IFG to find an optimal path in the IFG from an entry point of the attack to an attack target. On the other hand, the objective of  DIFT is to dynamically select nodes in the IFG to perform security analysis for detecting APT. Our game model has imperfect information as the players do not have information about the actions of the opponent.  We consider two scenarios of the game~(i)~when the false-positive and false-negative rates  are known to both  players and (ii)~when the false-positive and false-negative rates  are unknown to both  players. Case~(i) translates to a game model with complete information and we propose a value iteration-based algorithm and prove the convergence. Case~(ii) translates to an incomplete information  game with unknown transition probabilities. In this case, we propose Hierarchical Supervised Learning (HSL) algorithm  that integrates a neural network, to predict the value vector of the game, with a policy iteration algorithm to compute an approximate equilibrium. We implemented our algorithms for cases~(i) and~(ii) on real attack datasets for nation state and ransomware attacks and validated the performance of our approach. 
\end{abstract}
\end{frontmatter}
\section{Introduction}\label{sec:intro}

Advanced persistent threats (APTs) are sophisticated  and prolonged cyberattacks that target  specific high value organizations in sectors such as national defense, manufacturing, and the financial industry \cite{jang2014survey, watkins2014impact}. APTs  use advanced attack methods, including advanced exploits of zero-day vulnerabilities, as well as highly-targeted spear phishing and other social engineering techniques to gain access to a network and then remain undetected for a prolonged period of time. The attacking strategies of APTs are stealthy and are methodically designed to bypass conventional security mechanisms  to cause more permanent, significant, and irreversible damages to the network. 

Interaction of APTs with the network introduce information flows in the form of control flow and command flow commands which get recorded in the system log. Analyzing information flows is thus a promising approach to detect presence of APTs \cite{ji2017rain}.
 Dynamic information flow tracking (DIFT) \cite{suh2004secure}  is a flow tracking-based mechanism that is widely used to detect APTs. DIFT tags information flows originating from suspicious input channels in the network and  tracks the propagation of the tagged flows. DIFT then initiates security check points, referred to as {\em traps}, to analyze the tagged flows and verify the authenticity of the flows. As the ratio of the malicious flows to the benign flows is very small in the network, implementation of DIFT introduces significant performance  overhead  on the network  and 
 generates false-positives and false-negatives \cite{ji2017rain}.
 
 This paper models detection of APTs using a DIFT-based detection mechanism. The effectiveness of detection depends on both the security policy of the DIFT and also on the actions of the APT. We model the strategic interaction between APTs and DIFT as a stochastic game. The game unfolds on a state space that is defined using the information flow graph (IFG) constructed from the system log of the system. The  objective of the DIFT is to select locations in the system, i.e., nodes in the IFG, to place the security check points while minimizing the errors due to generation of false-positives and false-negatives.  On the other hand, the objective of the APT is to choose an attack path starting from an entry point to a target node in the IFG. We note that, while APTs aim to achieve a reachability objective, the aim of DIFT is an optimal node selection. The APT-DIFT game has imperfect information structure as APTs are unaware of the locations at which DIFT places security check points and DIFT is unaware of the path chosen by the APT for the attack. We consider two scenarios of the game: (i)~when the false-positive and false-negative rates are known to both players and (ii)~when the  false-positive and false-negative rates are unknown to both players. While case~(i) is a complete information game,  case~(ii) is an incomplete information game with unknown transition probabilities.

We make the following  contributions in this paper:
\begin{enumerate}
\item[$\bullet$]We formulate a constant-sum stochastic game with total reward structure that models the interaction between the DIFT  and APT. The game  captures the information asymmetry between the attacker and the defender, false-positives, and false-negatives generated by DIFT. 
\item[$\bullet$] When the transition probabilities are known, we present a value iteration algorithm to compute equilibrium strategy and prove convergence and polynomial-time complexity.
\item[$\bullet$] When the transition probabilities are unknown, we propose HSL (Hierarchical Supervised Learning), a supervised learning-based algorithm  to compute an approximate equilibrium strategy of the incomplete information game. HSL  integrates a neural network with  policy iteration  and utilizes the hierarchical structure of the state space of the APT-DIFT game to guarantee convergence.
\item[$\bullet$] We implement our algorithms on attack data, of a nation state attack and a ransomware attack, collected using Refinable Attack INvestigation  (RAIN) system, and  show the convergence of the algorithm. 
\end{enumerate}


This paper is organized is as follows: Section~\ref{sec:Related work} summarizes the related work.  Section~\ref{sec:pre} elaborates on information flow graph, and the  attacker and defender models. Section~\ref{sec:game} presents the formulation of the  stochastic game between APT and DIFT. Section~\ref{sec:solution-concept} discusses the min-max solution concept used for solving the game.  Section~\ref{sec:results} presents the value iteration algorithm for the model-based approach and  the  supervised learning-based approach for the model-free case. Section~\ref{sec:sim}  illustrates the results and discussions on the experiments conducted. Section~\ref{sec:end} concludes the paper.

\section{Related work}\label{sec:Related work}


Stochastic games model the  strategic interactions among multiple agents or players in  dynamical systems and  are well studied in the context of  security games  \cite{LyeWin-05},  economic games \cite{Ami-03}, and resilience of cyber-physical systems \cite{ZhuBas-11}. While stochastic games provide a strong framework for modeling security problems, often solving stochastic games are hard. There exists dynamic programming-based approaches,  value-iteration and policy iteration, for computing Nash equilibrium (NE) strategies of these games. However, in many stochastic game formulations  dynamic programming-based approaches do not converge.
 References  \cite{Ahmadi-1, Ahmadi-2}  modeled and studied the interaction between defender and attacker in a cyber setting as a two-player partially observable stochastic game and presented an approach for synthesizing sub-optimal strategies.

Multi-agent reinforcement learning (MARL) algorithms have been proposed in the literature to obtain NE strategies of zero-sum  and nonzero-sum stochastic games, when the game information such as transition probabilities  and payoff functions of the players are unknown. However, algorithms with guaranteed convergence are available only for special cases \cite{hu1998multiagent, hu2003nash,  najim2001adaptive, prasad2015two}. A Nash-Q learning algorithm is given in \cite{hu2003nash} that converges to NE of a general-sum game when the NE is unique. The algorithm in \cite{hu2003nash} is guaranteed to converge for discounted games with  perfect information and unique NE. 
A Q-learning  algorithm is presented in \cite{hu1998multiagent} for discounted, constant-sum games with unknown transition probabilities  when the players have perfect information. An adaptive policy approach using a regularized Lagrange function is proposed in \cite{najim2001adaptive} for zero-sum games with unknown transition probabilities and irreducible state space.  A stochastic approximation-based algorithm is presented in  \cite{prasad2015two} for computing NE of discounted general-sum games with irreducible state space.  
 While there exist algorithms that converge to NE of the game for special cases, there are no known algorithms to compute NE of a  general stochastic game with incomplete information structure.

 Detection of APTs are analyzed using game theory  in the literature by modeling the interactions between APT and the detection mechanism as a game \cite{rass}.
 A dynamic game model is given in \cite{huang2019adaptive} to detect APTs  that  can adopt adversarial deceptions. A honeypot game theoretic model is introduced in \cite{honeypot} to address detection of APTs.

There has been some effort  to model the detection  of APTs by a DIFT-based detection mechanism using game-theoretic framework  \cite{MooShaAllVClaBushWenPoo-18_arx},  \cite{dinukaCDC}, \cite{moothedath2020StochasticDIFT}.   Reference \cite{MooShaAllVClaBushWenPoo-18_arx} studied a DIFT model which selects the trap locations in a dynamic manner and proposed a min-cut based solution approach. Detection of APTs when the attack consists of multiple attackers,  possibly with different capabilities,  is studied in  \cite{dinukaCDC}. We note that, the game models in  \cite{MooShaAllVClaBushWenPoo-18_arx} and \cite{dinukaCDC}  did not consider  false-negative and false-postive generation by DIFT and are hence  non-stochastic. Recently, stochastic models  of APT-DIFT games was proposed in \cite{moothedath2020StochasticDIFT}, \cite{SahMooAllClaLeePoo-18}.  Reference  \cite{SahMooAllClaLeePoo-18} proposed a value iteration-based algorithm to obtain an $\epsilon$-NE of the {\em discounted} game and \cite{moothedath2020StochasticDIFT} proposed a min-cut solution approach to compute an NE of the game. Formulations in  \cite{moothedath2020StochasticDIFT}, \cite{SahMooAllClaLeePoo-18} assume that the transition probabilities of the game (false-positives and false-nagatives)   are known. 

DIFT-APT games with unknown transition probabilities are studied in \cite{sahabandu2019stochasticGameSec}, \cite{Sahabandu-arxiv-2020}.  While  reference \cite{sahabandu2019stochasticGameSec} proposed a two-time scale algorithm to compute an equilibrium of the discounted game,  \cite{Sahabandu-arxiv-2020} considered an average reward payoff structure.   Moreover, the game models in \cite{sahabandu2019stochasticGameSec}, \cite{Sahabandu-arxiv-2020} resulted in a unichain structure on the state space of the game, unlike  the game model considered in this paper. The unichain structure of the game  is critically utilized in \cite{sahabandu2019stochasticGameSec}, \cite{Sahabandu-arxiv-2020} for developing the solution approach and deriving the results. Reference \cite{misra2019learning}, the preliminary conference version of this work, studied DIFT-APT game with unknown transition probabilities and  average reward payoff structure, when the state space graph is not a unichain structure. 
The approach in  \cite{misra2019learning} approximated the payoff function to a convex function and utilized an input convex neural network (ICNN) architecture. The ICNN is  integrated with an alternating optimization technique   and  empirical results are presented in \cite{misra2019learning} to learn approximate equilibrium strategies. In this paper, we do not restrict the payoff function to be convex and hence relax the condition for the neural network to be input convex.

\section{Preliminaries}\label{sec:pre}
\subsection{Information Flow Graph (IFG)}\label{sec:IFG}
Information Flow Graph (IFG) provides a graphical representation of  the whole-system execution during the entire period of monitoring.  We use  RAIN recording system \cite{ji2017rain} to construct the IFG. RAIN comprises a kernel module which logs all the system calls that are requested by the user-level processes in the target host. The target host then sends the recorded system recorded logs to the analysis host. The analysis host consists of a provenance graph builder, which then constructs the coarse-grained IFG. The coarse-IFG is then refined using various pruning techniques. A brief discussion on the pruning steps is included in Section~\ref{sec:sim}. For more details, please refer \cite{ji2017rain}.

IFGs are widely used by analysts for  effective cyber response \cite{hossain2018dependence}, \cite{ji2017rain}. IFGs are directed graphs, where the nodes in the graph form entities such as processes, files, network connections, and memory objects in the system. Edges correspond to the system calls and are oriented in the direction of the information flows and/or causality \cite{hossain2018dependence}. Let  directed graph $\G=(V_{\G}, E_{\G})$ represent  IFG of the system, where $V_{\G}=\{v_1, \ldots, v_N\}$ and $E_{\G} \subseteq V_{\G} \times V_{\G}$.  Given a system log, one can build the corase-grained-IFG which is then pruned and refined incrementally to obtain the corresponding  IFG \cite{ji2017rain}. 

\subsection{Attacker Model}\label{sec:attacker}
This paper consider advanced cyberattacks called as  APTs. APTs are information security threats that target sensitive information in specific organizations. 
APTs  enter into the system by leveraging  vulnerabilities in the system and implement multiple sophisticated methods to continuously and stealthily steal information \cite{hossain2018dependence}. A typical APT attack consists of multiple stages  initiated by a successful penetration and followed by initial compromise, C\&C communications, privilege escalation, internal reconnaissance, exfiltration, and cleanup \cite{hossain2018dependence}. Detection of APT attacks are very challenging as the attacker activities blend in seamlessly with normal system operation. Moreover,  as APT attacks are customized, they can not be detected using  signature-based detection methods such as firewalls, intrusion detection systems, and antivirus software. 

\subsection{Defender Model}\label{sec:DIFT}
Dynamic information flow tracking (DIFT) is a promising technique to detect security attacks on  computer systems \cite{ClaLiOrs:07, EncGilHanTen-14, suh2004secure}. 
DIFT  tracks the system calls to detect the malicious information flows from an adversary and to restrict the use of these malicious flows. DIFT architecture is composed of (i)~tag sources, (ii)~tag propagation rules, and (iii)~tag sinks (traps). 
Tag sources are the suspicious locations in the system, such as  keyboards, network interface, and hard disks,  that are tagged/tainted as suspicious by DIFT.  Tags are single bit or multiple bit markings depending on the level of granularity manageable with the available memory and resources. All the processed values of the tag sources are tagged and DIFT tracks the propagation of the tagged flows. 
When anomalous behavior is detected in the system, DIFT initiates security analysis and tagged flows are  inspected by DIFT at specific locations, referred to as {\em traps}.  DIFT conducts fine grain analysis at traps to detect the attack and to  perform risk assessment. 
 While tagging and trapping using DIFT is a promising detection mechanism against APTs, DIFT introduces  performance overhead on the system. Performing security  analysis (trapping) of tagged flows uses considerable amount of memory of the system \cite{EncGilHanTen-14}.  Thus there is a tradeoff between system log granularity and  performance \cite{ji2017rain}. 

\section{Problem Formulation}\label{sec:game}
 
This section formulates the interaction between the APT and the DIFT-based defense mechanism as a stochastic game where the decisions taken by the adversary (APT) and the defender (DIFT), referred to as agents/players, influences the system behavior. The objective of the adversary is to choose transitions in the IFG so as to reach the destination node set  $\D \subset V_{\G}$ from the set of entry points  $\lambda \subset V_{\G}$. On the other hand, the objective of the defender is to dynamically choose locations to trap the information flow so as to secure the system from any possible attack. 
The state of the system denotes the location of the tagged information flow in the system.    
The defender cannot distinguish a malicious and a benign flow. 
On the other hand, the adversary does not know if the tagged flow will get trapped by the defender while choosing a transition. Thus the game is an imperfect information game. We note that, the state of the game is known here unlike in a partially observable game setting where the state of the game is unknown.
The system's current state and the joint actions of the agents together determine a probability distribution over the  possible  next states of the system.

\subsection{State Space}
The state of the game at a time step $t$, denoted as $s_t$, is defined as the position of the tagged information flow in the system.  Let  $\bS=\{v_0, v_1, \ldots, v_N, \phi, \tau_{\sA}, \tau_{\sB}\}$ be the finite state space of the game. Here, $s_t = \phi$ denotes the state when the tagged flow drops out by abandoning the attack, $s_t = \tau_{\sA}$ denotes the state when DIFT detects the adversary after performing the security analysis, and $s_t = \tau_{\sB}$ denotes the state when DIFT  performs  security analysis and concludes a benign flow as malicious (false positive). Let $\D$ be the destination (target) node set of the adversary and $|\D|=q$.   Without loss of generality, let $\D = \{v_{1}, v_{2}, \ldots,  v_q\}$. We assume that both agents know the IFG and the destination set  $\D$. 
The state $v_0$ corresponds to a virtual state that denote the starting point of the game.  In the state space $\bS$, $v_0$ is connected to all nodes in $\lambda$. Thus   the state  of the game at $t=0$ is  $s_0= v_0$.

\subsection{Action Spaces}
At every time instant in the game, the players choose their actions from their respective action sets.  The action set of the adversary is defined as the possible set of transitions the adversarial flow can execute at a state. The defender has limited resources and hence can perform security analysis  at one information flow at a time. Thus the defender's action set is defined such that, at every decision point, the defender chooses one node to perform security analysis, among the possible nodes that the adversary can transition to.

Let the action set of the adversary and the defender at a state $s_t$ be  $\A_{\sA}(s_t)$ and $\A_{\sD}(s_t)$, respectively. At a state $s_t\in V_{\G}$, the adversary chooses an action either to drop out, i.e., abort the attack, or to continue the attack by transitioning to a neighboring node. 
Thus $\A_{\sA}(s_t) :=  \{\phi\} \cup \{v_j: s_t=v_i~{\rm and~} (v_i, v_j) \in E_{\G}\} $.  On the other hand, the defender's action set at state $s_t$ $\A_{\sD}(s_t):= \{0\} \cup \{v_j: s_t=v_i~{\rm and~} (v_i, v_j) \in E_{\G}\}$, where $\A_{\sD}(s_t)=0$ represents that the tagged information flow is not trapped and $\A_{\sD}(s_t)=v_j$  represents that defender decides to perform security analysis at the node $v_j \in V_{\G}$ at instant $t$. 

The game originates at $t=0$ with state $s_0=v_0$ with $\A_{\sA}(v_0) := \lambda$ and $\A_{\sD}(v_0) :=0$. The definition of action sets at $v_0$  captures the fact that the adversarial flow originates at  one of the  entry points.  Performing security analysis at entry points can not detect the attack as there are not enough traces to analyze. 
Moreover, the  game terminates at time $t$ if the state of the game $s_t \in \{\phi, \tau_{\sA}, \tau_{\sB} \} \cup \D$. The set of states  $\{\phi, \tau_{\sA}, \tau_{\sB}\}\cup \D$ are referred to as the {\em absorbing states} of the  game.  
For $s_t \in \{\phi, \tau_{\sA}, \tau_{\sB}\} \cup \D$, $\A_{\sD}(s_t) = \A_{\sA}(s_t) = \emptyset$. At a non-absorbing state, the players choose their actions from their respective action set and the game evolves until the state of the game is an absorbing state.

\subsection{Transitions}
The transition probabilities of the game are governed by the uncertainty  associated with DIFT. DIFT is not capable of performing security analysis accurately due to the generation of false-positives and false-negatives. 
 Consider a tagged flow incoming at node $v_i \in V_{\G}$ at time $t$. Let the action chosen by the defender and the adversary at $s_t$ be $d_t$ and $a_t$, respectively. If defender chooses not to trap an information flow, then the flow proceeds to the node in $\G$ chosen by the adversary. That is, if $d_t=0$, then $s_{t+1}=a_t$.   If the defender chooses to trap the node at which the adversary also decides to transition to,  then the adversary is detected with probability $1-FN(d_t)$ and the flow transition to the node in $\G$ corresponding to the action of the adversary with the remaining probability. That is, if $d_t=a_t=v_i$, then $s_{t+1}=\tau_{\sA}$ with probability $1-FN(d_t)$ and $s_{t+1}=a_t$ with probability $FN(d_t)$. If the defender decides to trap a node which is not the node the adversary decides to transition to,  then the defender generates a false-positive with probability $FP(d_t)$ and the flow transition to the node in $\G$ corresponding to the action of the adversary with the remaining probability. That is, if $d_t \neq 0$ and $d_t \neq a_t$, then $s_{t+1}=\tau_{\sB}$ with probability $FP(d_t)$ and $s_{t+1}=a_t$ with probability $1-FP(d_t)$. 
 
 Let $P(s_t, s_t, d_t, s_{t+1})$ denote the probability of transitioning to a state $s_{t+1}$ from a state $s_t$ under actions $a_t$ and $d_t$. Then,
\begin{equation}\label{eq:transition_D}
P(s_t, a_t, d_t, s_{t+1}) =
 \begin{cases}
 \begin{array}{lll}
1,  &  s_{t+1}=a_t, & \mbox{~if~} d_t=0\\
FN(d_t), &  s_{t+1}=a_t, & \mbox{~if~} d_t=a_t\\
1-FN(d_t),&   s_{t+1}=\tau_{\sA}, & \mbox{~if~} d_t=a_t\\
FP(d_t),&   s_{t+1}=\tau_{\sB}, & \mbox{~if~} d_t\neq a_t\\
1-FP(d_t),&  s_{t+1}=a_t, & \mbox{~if~} d_t\neq a_t\\
 \end{array}
 \end{cases}
 \end{equation}
 The transition probabilities $FN(v_i), FP(v_i)$, for $v_i \in V_{\G}$, denote the rates of generation of false-negatives and false-positives, respectively, at the different nodes in the IFG. We note that, different nodes in IFG have different capabilities to perform security analysis and depending on that the value of $FN(\cdot)$ and $FP(\cdot)$ are different. The numerical values of these transition probabilities also depend on the type of the attack. As APTs are tailored attacks that can manipulate the system operation and evade conventional security mechanisms such as firewalls, anti-virus software, and intrusion-detection systems,  $FN(\cdot)$'s and $FP(\cdot)$'s  are often  unknown and hard to estimate accurately.
 
\subsection{Strategies}
A strategy for a player is a mapping which yields probability distribution over the player's actions at every state. Consider {\em mixed} (stochastic) and stationary player strategies.   
When the strategy is stationary, the probability of choosing an action at a state depends only  on the current state of the game. 
Let the stationary strategy space of the attacker be $\bp_{\sA}$ and that of the defender be $\bp_{\sD}$. Then, $\bp_{\sA}: \bS \rightarrow [0,1]^{|\A_{\sA}|}$ and $\bp_{\sD}: \bS \rightarrow [0,1]^{|\A_{\sD}|}$. Let $p_{\sA} \in \bp_{\sA}$ and $p_{\sD} \in \bp_{\sD}$.   Here, $p_{\sD}=[p_{\sD}(v_{q+1}), \ldots, p_{\sD}(v_{N})]$, where $p_{\sD}(v_i)$ denotes the probability distribution over all the actions of the defender at state $v_i$, i.e., out-neighboring nodes of $v_i$ in IFG and not trapping. For $p_{\sA}= [p_{\sA}(v_0), p_{\sA}(v_{q+1}), \ldots, p_{\sA}(v_{N})]$, $p_{\sA}(v_i)$ is a probability distribution over all possible out-neighbors of the node  $v_i$ in the IFG and $\phi$.  We note that, $\A_{\sA}(s_t) = \A_{\sD}(s_t)= \emptyset$, for $s_t \in \{\phi, \tau_{\sA}, \tau_{\sB}\} \cup \D$. Also, $\A_{\sD}(v_0)= \emptyset$. 

\subsection{Payoffs}
In the APT-DIFT game, the aim of the APT is to choose transitions in the IFG in order to reach destination by evading detection by DIFT. The aim of DIFT is to select the security check points to secure the system from the APT attack.  Recall that APTs are stealthy attacks that undergo for a long period of time with the ultimate goal of stealing information over a long time. The destructive consequences of APTs are often unnoticeable until the final stages of the attack \cite{BenPekButFel:12}. In this paper we consider the payoff functions of the APT-DIFT game  such that players achieve reward ($\beta$) when their respective aim is achieved. 
 DIFT achieves the aim under two scenarios:~(1)~when the APT is detected successfully and~(2)~when APT drops out the attack. We note that, in both cases~(1) and~(2),  DIFT  secures the system from the APT attack and hence is rewarded $\beta$. On the other hand,  APT achieves the aim under two scenarios:~(i)~when APT reach destination and~(ii)~when a benign flow is concluded as malicious, i.e., false-positive. We note that, when a benign flow is concluded as malicious, DIFT no longer analyzes the flows (as it believes APT is detected) and hence the actual malicious flow evades detection and can achieve the aim. Thus in both cases (i) and~(ii) APT achieves the aim and receives a reward of $\beta$.

Let the payoff of  player $k$ at an absorbing state $s_t$ be $r^k(s_t)$, where $k \in \{A, D\}$.  At state $\tau_{\sA}$ DIFT receives a payoff of $\beta$ and APT receives $0$ payoff.  At state $\tau_{\sB}$ the APT  receives a payoff of $\beta$ and DIFT receives $0$ payoff.  At a state  in the set $\D=\{v_{1}, v_{2}, \ldots, v_q\}$, APT receives a payoff of $\beta$ and DIFT receives $0$ payoff. At state $\phi$ DIFT receives a payoff of $\beta$ and APT receives $0$ payoff.  Then, 
\begin{eqnarray}
r^{\sA}(s_t)\hspace*{-2 mm}&=&\hspace*{-2 mm}\label{eq:Apayoff}
\begin{cases}
\begin{array}{ll}
 \beta,& s_t\in \D\\
 \beta,& s_t=\tau_{\sB}\\
 0,& \mbox{otherwise}
\end{array}
\end{cases}\\
r^{\sD}(s_t)\hspace*{-2 mm}&=&\hspace*{-2 mm}\label{eq:Dpayoff}
\begin{cases}
\begin{array}{ll}
 \beta,& s_t=\tau_{\sA}\\
  \beta,& s_t=\phi\\
 0,& \mbox{otherwise}
\end{array}
\end{cases}
\end{eqnarray}
 At each stage in game, $s_t$ at time $t$, both players simultaneously choose their action $a_t$ and $d_t$ and transition to a next state $s_{t+1}$. This is continued until they reach an absorbing state and receive the payoff defined using Eqs.~\eqref{eq:Apayoff} and~\eqref{eq:Dpayoff}.
 
  Let $T$ denote termination time of the game, i.e., $s_{t+1} = s_{t}$ for $t \geqslant T$.  At the termination time $s_T \in \{\phi, \tau_{\sA}, \tau_{\sB}\}\cup \D$.
Let $U_{\sA}$ and $U_{\sD}$ denote the payoff functions of the adversary and the defender, respectively.  
As the initial state of the game is $v_0$, for a strategy pair $(p_{\sA}, p_{\sD})$ the expected payoffs of the players are 
\begin{eqnarray}
\hspace*{-7.5 mm} U_{\sA}(v_0, p_{\sA}, p_{\sD}) \hspace*{-3 mm}&=&\hspace*{-3 mm} \mathbb{E}_{v_0,p_{\sA}, p_{\sD}} \left[\sum\limits_{t = 0}^{T}r^{\sA}(s_t)\right] = \mathbb{E}_{v_0,p_{\sA}, p_{\sD}} \Big[r^{\sA}(s_T)\Big],\label{eq:Adv_obj}\\
\hspace*{-7.5 mm} U_{\sD}(v_0, p_{\sA}, p_{\sD}) \hspace*{-3 mm}&=&\hspace*{-3 mm} \mathbb{E}_{v_0,p_{\sA}, p_{\sD}} \left[\sum\limits_{t = 0}^{T}r^{\sD}(s_t)\right] = \mathbb{E}_{v_0,p_{\sA}, p_{\sD}} \Big[r^{\sD}(s_T)\Big],\label{eq:Def_obj}
\end{eqnarray}
where $\mathbb{E}_{v_0,p_{\sA}, p_{\sD}}$ denotes the expectation with respect to $ p_{\sA}$ and $p_{\sD}$ when the game originates at state $v_0$. 
The objective of  APT and DIFT is to individually maximize their expected total payoff 
 given in Eq.~\eqref{eq:Adv_obj}  and  Eq.~\eqref{eq:Def_obj}, respectively. 
  Thus the optimization problem solved by DIFT is

\begin{equation}\label{eq:D-prob}
\max_{p_{\sD} \in {\bf P}_{\sD}}U_{\sD}(v_0, p_{\sA}, p_{\sD}).
\end{equation}
Similarly, the optimization problem of the APT is
\begin{equation}\label{eq:A-prob}
\max_{p_{\sA} \in {\bf P}_{\sA}}U_{\sA}( v_0, p_{\sA}, p_{\sD}). 
\end{equation}
 
To bring out the structure of the payoffs well, we let $R_{s_0}(s)$ be the cumulative probability with which  the state of the game at the termination time is $s$,   when the game  originates at $s_0$.  With slight abuse of notation, we use $R_{s_0}(\D)$ to denote the cumulative probability with which  the state of the game at the termination time lies in set $\D$,   when the game  originates at $s_0$.  At the time of termination, i.e., $t=T$, the state of the game satisfies one of the following: (i)~$s_T =\tau_{\sA}$, (ii)~$s_T =\tau_{\sB}$, (iii)~$s_T\in \D$, and (iv)~$s_T =\phi$. 
Using these definitions Eqs.~\eqref{eq:Adv_obj} and~\eqref{eq:Def_obj} can be rewritten as
 \begin{eqnarray}
U_{\sA}(v_0, p_{\sA}, p_{\sD}) &=& \Big( R_{s_0}(\D)+R_{s_0}(\tau_{\sB})\Big)\, \beta, \label{eq:Adv_final}\\
U_{\sD}(v_0, p_{\sA}, p_{\sD}) &=& \Big(R_{s_0}({\tau_{\sA}})+ R_{s_0}({\phi} )\Big) \beta.\label{eq:Def_final}
\end{eqnarray}

Using the reformulation of the payoff functions,
we present the following property of the APT-DIFT game.
\begin{prop}\label{prop:constant-sum}
The APT-DIFT stochastic game is a constant sum game.
\end{prop}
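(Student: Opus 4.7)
The plan is to verify that the sum of the players' expected payoffs equals the constant $\beta$ irrespective of the strategy pair $(p_{\sA}, p_{\sD})$, which is precisely the definition of a constant-sum game.

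First, I would do the pointwise check on the absorbing states. By Eqs.~\eqref{eq:Apayoff} and~\eqref{eq:Dpayoff}, for every $s_T \in \{\phi, \tau_{\sA}, \tau_{\sB}\} \cup \D$ exactly one of $r^{\sA}(s_T)$ and $r^{\sD}(s_T)$ equals $\beta$ and the other equals $0$, so $r^{\sA}(s_T) + r^{\sD}(s_T) = \beta$. Since $r^{\sA}(s_t) = r^{\sD}(s_t) = 0$ on any non-absorbing state reached before termination, adding Eqs.~\eqref{eq:Adv_obj} and~\eqref{eq:Def_obj} gives
\begin{equation*}
U_{\sA}(v_0, p_{\sA}, p_{\sD}) + U_{\sD}(v_0, p_{\sA}, p_{\sD}) = \mathbb{E}_{v_0, p_{\sA}, p_{\sD}}\bigl[r^{\sA}(s_T) + r^{\sD}(s_T)\bigr] = \beta,
\end{equation*}
by linearity of expectation.

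As an alternative, and perhaps cleaner, route I would use the reformulated payoffs in Eqs.~\eqref{eq:Adv_final} and~\eqref{eq:Def_final}. Adding those two expressions yields
\begin{equation*}
U_{\sA} + U_{\sD} = \bigl(R_{v_0}(\D) + R_{v_0}(\tau_{\sA}) + R_{v_0}(\tau_{\sB}) + R_{v_0}(\phi)\bigr)\,\beta,
\end{equation*}
and the four cumulative termination probabilities appearing inside the parenthesis partition the event $\{s_T \in \{\phi, \tau_{\sA}, \tau_{\sB}\} \cup \D\}$, so they sum to one and the total is $\beta$.

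The one delicate point, which is where I expect the only real obstacle to lie, is justifying that the game terminates in finite time almost surely, so that $R_{v_0}(\D) + R_{v_0}(\tau_{\sA}) + R_{v_0}(\tau_{\sB}) + R_{v_0}(\phi) = 1$ under every strategy pair. Since the IFG may contain cycles, one cannot rule out infinite trajectories a priori; however, at every non-absorbing visit the defender's trapping action together with a nonzero $FP$/$FN$ complement, or the adversary's drop-out action $\phi$, provides a strictly positive probability of entering an absorbing state in one step. A standard geometric-trials argument then shows $\mathbb{P}(T < \infty) = 1$, after which the constant-sum identity $U_{\sA} + U_{\sD} \equiv \beta$ follows immediately. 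If the formulation implicitly restricts attention to strategies under which termination is assured (as the statement $s_{t+1} = s_t$ for $t \geqslant T$ suggests), this subtlety collapses and the pointwise calculation above is already sufficient.
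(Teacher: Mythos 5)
Your proposal is correct and, in its second route, is essentially identical to the paper's own proof: the paper likewise adds Eqs.~\eqref{eq:Adv_final} and~\eqref{eq:Def_final} and uses the fact that $R_{s_0}(\phi)+R_{s_0}(\tau_{\sA})+R_{s_0}(\tau_{\sB})+R_{s_0}(\D)=1$ because $s_T$ must lie in the absorbing set. Your extra care about almost-sure finite termination is a legitimate point the paper leaves implicit here (it is handled only later, via the acyclicity result of Theorem~\ref{th:finite} and the nonzero stopping probabilities invoked in the proof of Proposition~\ref{prop:NE}), but it does not change the substance of the argument.
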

\begin{proof}
Recall that APT-DIFT game has absorbing states $\phi, \tau_{\sA}, \tau_{\sB}$ and $\D$. At  the termination time of the game, i.e., $t=T$, the state of the game $s_T \in \{\phi, \tau_{\sA}, \tau_{\sB} \} \cup \D$. This implies  $$R_{s_0}(\phi)+ R_{s_0}(\tau_{\sA})+ R_{s_0}(\tau_{\sB})+R_{s_0}(\D)=1.$$
 This gives $U_{\sA}(v_0, p_{\sA}, p_{\sD})+U_{\sD}(v_0, p_{\sA}, p_{\sD}) =$ $\Big( R_{s_0}(\phi)+ R_{s_0}(\tau_{\sA})+ R_{s_0}(\tau_{\sB})+R_{s_0}(\D) \Big)\,\beta = \beta$. Thus the game between APT and DIFT is a constant sum game with constant equal to $\beta$.
\end{proof}
\section{Solution Concept}\label{sec:solution-concept}
The solution concept of the game is defined as follows. 
 Each player is interested in maximizing their individual reward in the {\em minimax} sense. In other words, each player is assuming the worst case for an optimal opponent player.
Since the APT-DIFT game is a constant-sum game, it is sufficient to view that the opponent player is acting to minimize the reward of the agent. Thus DIFT chooses its actions to find an optimal strategy $p^{\*}_{\sD}$ that achieves the upper value defined as

\begin{equation}\label{eq:upper-value}
\overline{V}(s_0) = \max_{p_{\sD} \in {\bf P}_{\sD}}\min_{p_{\sA} \in {\bf P}_{\sA}}U_{\sD}(v_0, p_{\sA}, p_{\sD}).
\end{equation}

On the other hand,  APT chooses its actions  to find an optimal strategy $p^{\*}_{\sA}$  that achieves

\begin{equation}\label{eq:A-value}
 \max_{p_{\sA} \in {\bf P}_{\sA}}\min_{p_{\sD} \in {\bf P}_{\sD}}U_{\sA}(v_0, p_{\sA}, p_{\sD}) = \max_{p_{\sA} \in {\bf P}_{\sA}}\min_{p_{\sD} \in {\bf P}_{\sD}}\Big(\beta-U_{\sD}(v_0, p_{\sA}, p_{\sD})\Big).
\end{equation}
This is equivalent to saying that APT aims to find an optimal strategy $p^{\*}_{\sA}$ that achieves the lower  value defined as
\begin{equation}\label{eq:lower-value}
\underbar{V}(s_0) =  \min_{p_{\sA} \in {\bf P}_{\sA}}\max_{p_{\sD} \in {\bf P}_{\sD}}U_{\sD}(v_0,  p_{\sA}, p_{\sD}).
\end{equation}
From Eqs.~\eqref{eq:upper-value} and~\eqref{eq:lower-value}, the defender tries to maximize and the adversary tries to  minimize $U_{\sD}(v_0, p_{\sA}, p_{\sD})$. Hence the value of the game is defined as 
\begin{eqnarray}\label{eq:value-def}
V^{\*}(s_0) &:=&  \max_{p_{\sD} \in {\bf P}_{\sD}}U_{\sD}(v_0, p^{\*}_{\sA}, p_{\sD})= U_{\sD}(v_0, p^{\*}_{\sA}, p^{\*}_{\sD})\nonumber\\ 
&=&  \min_{p_{\sA} \in {\bf P}_{\sA}}U_{\sD}(v_0, p_{\sA}, p^{\*}_{\sD}).
\end{eqnarray}
The strategy pair $(p^{\*}_{\sA}, p^{\*}_{\sD})$ is referred to as the saddle point or Nash equilibrium (NE) of the game.

\begin{defn}\label{def:epsilon-NE}
Let $(p^{\*}_{\sA}, p^{\*}_{\sD})$ be a Nash equilibrium of the APT-DIFT game. A strategy pair $(p_{\sA}, p_{\sD})$ is said to be an $\epsilon-$Nash equilibrium, for $\epsilon > 0$, if $$U_{\sD}(v_0, p_{\sA}, p_{\sD}) \geqslant  U_{\sD}(v_0,  p^{\*}_{\sA}, p^{\*}_{\sD})-\epsilon.$$  In other words, the value corresponding to the strategy pair $(p_{\sA}, p_{\sD})$ denoted as $V(s_0)$ satisfies $$V(s_0) \geqslant V^{\*}(s_0)-\epsilon.$$
\end{defn}

Proposition~\ref{prop:NE} proves the existence of NE for the APT-DIFT game.

\begin{prop}\label{prop:NE}
There exists a  Nash equilibrium for the APT-DIFT  stochastic game.
\end{prop}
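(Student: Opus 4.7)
The plan is to reduce the statement to the classical existence theorem for finite zero-sum stochastic games via Shapley's operator. By Proposition~\ref{prop:constant-sum}, replacing $U_\sD$ with $U_\sD-\beta$ yields an equivalent zero-sum game with the same strategy spaces $\bp_\sA,\bp_\sD$ and with payoffs bounded in $[-\beta,0]$. Both the state space $\bS$ and the action sets $\A_\sA(s),\A_\sD(s)$ are finite at every state, so the problem sits inside Shapley's framework.

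Next, introduce the Shapley (minimax) operator $T:[0,\beta]^{|\bS|}\to[0,\beta]^{|\bS|}$ by $(TV)(s)=r^\sD(s)$ when $s\in\{\phi,\tau_\sA,\tau_\sB\}\cup\D$, and
$$(TV)(s)\;=\;\mathrm{val}\Big[\textstyle\sum_{s'\in\bS} P(s,a,d,s')\,V(s')\Big]_{(a,d)}$$
otherwise, where $\mathrm{val}(\cdot)$ is the value of the bracketed finite matrix game, which exists by the classical min--max theorem. Each bracketed entry is affine in $V$ and is a convex combination of the coordinates of $V$, so $T$ is continuous and maps $[0,\beta]^{|\bS|}$ into itself. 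Brouwer's fixed-point theorem then yields $V^{\*}\in[0,\beta]^{|\bS|}$ with $TV^{\*}=V^{\*}$.

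From this fixed point I would read off the equilibrium strategies: at every non-absorbing $s$, let $(p^{\*}_\sA(s),p^{\*}_\sD(s))$ be a pair of optimal mixed strategies of the matrix game whose value equals $(TV^{\*})(s)=V^{\*}(s)$, whose existence is again the min--max theorem. Concatenating over $s$ produces stationary strategies $(p^{\*}_\sA,p^{\*}_\sD)\in\bp_\sA\times\bp_\sD$, and using $TV^{\*}=V^{\*}$ together with definition~\eqref{eq:value-def} one verifies that $V^{\*}(v_0)=U_\sD(v_0,p^{\*}_\sA,p^{\*}_\sD)$ coincides with the saddle-point value, so that $(p^{\*}_\sA,p^{\*}_\sD)$ is a Nash equilibrium in the sense of Definition~\ref{def:epsilon-NE} with $\epsilon=0$.

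The main obstacle is the last identification of the Shapley fixed point $V^{\*}$ with the true expected payoff $U_\sD(v_0,p^{\*}_\sA,p^{\*}_\sD)$ under the undiscounted total-reward criterion. Certain stationary profiles (for instance the defender playing $d_t=0$ while the adversary cycles in $V_\G\setminus\D$) fail to reach $\{\phi,\tau_\sA,\tau_\sB\}\cup\D$ almost surely, so the one-step recursion satisfied by $V^{\*}$ admits extra solutions and does not by itself pin down the expected payoff. The standard remedy is to approach $V^{\*}$ by the fixed points $V^{\*}_\gamma$ of the contracting discounted Shapley operators $T_\gamma$ for $\gamma\in(0,1)$ (available by Banach's contraction principle and Shapley's theorem), and to extract a convergent subsequence as $\gamma\uparrow 1$ by compactness of $[0,\beta]^{|\bS|}$; the limit is a fixed point of $T$ whose equilibrium interpretation is inherited from the discounted games. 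Once this limiting step is executed, extracting the stationary NE as in the previous paragraph is routine.
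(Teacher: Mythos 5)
Your route is genuinely different from the paper's. The paper disposes of this proposition in a few lines by citing Shapley's existence theorem for finite zero-sum stochastic games with nonzero stopping probability: it observes that $0<FP(\cdot)<1$ and $0<FN(\cdot)<1$ make the per-stage absorption probability positive for every strategy pair \emph{except} the degenerate one in which the defender never traps, dismisses that profile as ``trivially irrelevant,'' and transfers the zero-sum result to the constant-sum setting via Proposition~\ref{prop:constant-sum}. You instead reconstruct the existence argument from scratch: a Shapley operator on $[0,\beta]^{|\bS|}$, Brouwer's theorem for a fixed point, and optimal mixed strategies of the stage matrix games read off at the fixed point. What your approach buys is honesty about the real technical issue, which the paper glosses over: under the undiscounted total-reward criterion the one-step optimality equation has spurious fixed points precisely because some stationary profiles (defender playing $d_t=0$ on a cycle of $V_\G$ that the adversary follows) never reach $\{\phi,\tau_\sA,\tau_\sB\}\cup\D$, so Shapley's hypothesis of uniformly positive stopping probability is not literally satisfied. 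The paper's ``trivially irrelevant'' dismissal and your explicit flagging of this obstacle are addressing the same hole.

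That said, your closing step is asserted rather than proved, and it is exactly where the difficulty sits. Taking $\gamma\uparrow 1$ along the discounted fixed points $V^{\*}_\gamma$ does give a fixed point of $T$ (for finite stochastic games the limit of discounted values in fact exists), but the claim that ``the equilibrium interpretation is inherited from the discounted games'' is false for general undiscounted stochastic games: the Big Match shows that stationary optimal strategies can fail to exist in the limit even when every discounted game has them. To close the argument you must use the specific structure of this game --- e.g., that any defender strategy placing positive probability on trapping at a state forces absorption with probability at least $\min\{FP,1-FN\}>0$ from that state, so that the only non-absorbing profiles are the never-trap ones, which are strictly dominated for the defender and can be excluded; alternatively, under Assumption~\ref{asm:IFG} the state space is acyclic and termination in at most $N+4$ steps holds for \emph{every} profile, which removes the obstacle entirely. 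Without one of these game-specific arguments, the vanishing-discount paragraph does not yet yield a stationary Nash equilibrium.
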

Proof of Proposition~\ref{prop:NE} is presented in the appendix.

In the next section we present our approach to compute an NE of the APT-DIFT game.

\section{Solution to the APT-DIFT Game}\label{sec:results}
This section  presents our approach to compute an NE of the APT-DIFT game. Firstly, we propose a model-based approach to compute NE using a value iteration algorithm. Later, we present a model-free approach based on a policy iteration algorithm, when the transition probabilities, i.e., the rate of false-positives and false-negatives, are unknown.

\subsection{Value Iteration Algorithm}

This subsection  presents our solution approach for solving the APT-DIFT game when the false-positive and false-negative rates (transition probabilities) are known. 
By Proposition~\ref{prop:NE}, there exists an NE for the APT-DIFT  game.
Our approach to compute NE of the  APT-DIFT  game is presented below.

Let $s\in {\bf S}$ be an arbitrary state of the APT-DIFT game. The state value function for a constant-sum game, analogous to the Bellman equation, can be written as 
\begin{equation}\label{eq:value}
V^{\*}(s) = \max_{p_{\sD}(s)\in {\bf P}_{\sD}(s)}\min_{a\in\A_{\sA}(s)}\sum_{d\in \A_{\sD}(s)}Q^{\*}(s,a,d)\,p_{\sD}(s,d),
\end{equation}

where the $Q$-values are defined as
\begin{equation}\label{eq:Q-value}
Q^{\*}(s,a,d) = \sum_{s'\in {\bf S}}P(s,a,d,s')V^{\*}(s').
\end{equation}

The min in Eq.~\eqref{eq:value} can also be defined over mixed (stochastic) policies, but, since it is `inside' the max, the minimum is achieved for a deterministic action choice \cite{littman2001value}.  The  strategy selected by Eq.~\eqref{eq:value} is referred to as the {\em minimax} strategy, and given by
\begin{equation}\label{eq:policy}
p^{\*}_{\sD}(s) =\arg\max_{p_{\sD}(s)\in {\bf P}_{\sD}(s)}\min_{a\in\A_{\sA}(s)}\sum_{d\in \A_{\sD}(s)}Q^{\*}(s,a,d)\,p_{\sD}(s,d).
\end{equation}
The aim here is to compute minimax  strategy $p^{\*}_{\sD}$. Our proposed algorithm  and  convergence proof  is given below.

Let $V = \{V(s): s \in {\bf S}\}$ denote the value vector corresponding to a strategy pair $(p_{\sA}, p_{\sD})$. The value for a state  $s \in {\bf S}$, $V(s)$,  is the expected payoff under strategy pair $(p_{\sA}, p_{\sD})$ if the game originates at state $s$. Then, $V^{\*}(s)$ is the  expected payoff corresponding to an NE strategy pair $(p^{\*}_{\sA}, p^{\*}_{\sD})$ if the game originates at state $s$.
 Algorithm~\ref{alg:value}  presents  the pseudocode to compute the value vector of the APT-DIFT game. 
 
 Algorithm~\ref{alg:value} is a value iteration algorithm defined on a value vector $V=\{V(s):s \in {\bf S}\}$,  where $V(s)$ is the value of the game  starting at the state $s$. Thus $V(s)=\beta$, for $s\in \{\phi, \tau_{\sA}\}$, and $V(s')=0$, for $s' \in \{\tau_{\sB}\} \cup \D$. The values of the states is computed recursively, for every iteration $k=1,2,\ldots$, $V^{(k)}(s)$,  as follows:

\begin{equation}\label{eq:val-iterate}
V^{(k)}(s)=
\begin{cases}
 \begin{array}{ll}
\beta, & \mbox{~if~} s\in \{\phi, \tau_{\sA}\},\\
0, & \mbox{~if~} s \in \{\tau_{\sB}\}\cup \D, \\
\mbox{val}(s, V^{(k-1)}), & \mbox{~otherwise},
\end{array}
\end{cases}
\end{equation}
where $\mbox{val}(s, V^{(k-1)})=$  $$ \max\limits_{p_{\sD} \in {\bf P}_{\sD}}\min\limits_{a \in \A_{\sA}(s)}\sum\limits_{s' \in {\bf S}}\sum\limits_{d \in \A_{\sD}(s)}p_{\sD}(s, d)P(s,a,d,s')V^{(k-1)}(s').$$

The value-iteration algorithm computes a sequence $V^{(0)},$ $V^{(1)},$ $V^{(2)}, \ldots$, where for $k=0,1,2\ldots$, each valuation $V^{(k)}$ associates with each state $s \in {\bf S}$ a lower bound $V^{(k)}(s)$ on the value of the game.   Algorithm~\ref{alg:value} need not terminate in
finitely many iterations \cite{real-analysis}. The parameter $\delta$, in step~\ref{step:del}, denotes the acceptable error bound which is the maximum absolute  difference in the values corresponding to two consecutive iterations, i.e., $\max\{|V^{(k)}(s)=V^{(k+1)}(s)|:s \in {\bf S}\}$, and serves as a stopping criteria for Algorithm~\ref{alg:value}. Smaller value of $\delta$ in Algorithm~\ref{alg:value} will return  a value vector that is closer  to the actual value vector. Below we prove that as $k$ approaches $\infty$, the values $V^{(k)}(s)$ approaches the exact values $V(s)$ from below, i.e., $\lim_{k\rightarrow \infty}V^{(k)}(s)$ converges to the value of the game at state $s$. Theorem~\ref{th:value-conv} proves the asymptotic convergence of the values.

 \begin{algorithm}[t]
  \caption{Value iteration algorithm to find NE  strategy in APT-DIFT game\label{alg:value}}
  \begin{algorithmic}
\State \textit {\bf Input:} APT-DIFT game with state space ${\bf S}$, destination set $\D$, action space $\A_{\sD}, \A_{\sA}$, payoff parameter $\beta$, transition probabilities $FN(\cdot), FP(\cdot)$
\State \textit{\bf Output:} Value vector  $\hat{V}$ and  defender policy $\hat{p}_{\sD}$
\end{algorithmic}
  \begin{algorithmic}[1]
  \State Initialize value vector $V^{(0)}(s) \leftarrow 0$, for all $s \in {\bf S}$ and $V^{(1)}(s') \leftarrow \beta$ for $s' \in \{\tau_{\sA}, \phi\}$, $V^{(1)}(s'') \leftarrow 0$ for all $s'' \in{\bf S}\setminus \{\tau_{\sA}, \phi\}$, $k\leftarrow 0$, $\delta \geqslant 0$\label{step:init}
  \While{$\max\{|V^{(k+1)}(s)-V^{(k)}(s)|:s\in {\bf S}\} > \delta$}\label{step:del}
  \State $k \leftarrow k+1$
  \For {$s \notin  \{\phi,  \tau_{\sA}, \tau_{\sB}\}\cup \D$}
  \State {\scalefont{1}{$V^{(k+1)}(s) \leftarrow $ 
  \State $\max\limits_{p_{\sD} \in {\bf P}_{\sD}}\min\limits_{a \in \A_{\sA}(s)}\sum\limits_{s' \in {\bf S}}\sum\limits_{d \in \A_{\sD}(s)}p_{\sD}(s, d)P(s,a,d,s')V^{(k)}(s')$}}\label{step:minmax}
  \EndFor
  
  \EndWhile \label{step:endwhile}
  \Return Vector $\hat{V}$, where $\hat{V}(s) \leftarrow V^{(k)}(s)$
\State  Compute DIFT strategy $\hat{p}_{\sD}$,  $\hat{p}_{\sD}(s) \leftarrow \arg\max\limits_{p_{\sD} \in {\bf P}_{\sD}}\min\limits_{a \in \A_{\sA}(s)}\sum\limits_{s' \in {\bf S}}\sum\limits_{d \in \A_{\sD}(s)}p_{\sD}(s, d)P(s,a,d,s')\hat{V}(s')$
\end{algorithmic}
\end{algorithm}

\begin{lemma}\label{lem:value}
Consider the value iteration algorithm in Algorithm~\ref{alg:value}. Let $V^{(k+1)}$ and $V^{(k)}$ be the value vectors corresponding to iterations $k+1$ and $k$, respectively.  Then, $V^{(k+1)}(s) \geqslant V^{(k)}(s)$, for all $s \in {\bf S}$. 
\end{lemma}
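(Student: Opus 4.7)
The plan is to prove the claim by induction on the iteration index $k$, leveraging the monotonicity of the max--min Bellman-type operator that defines one step of the value iteration in step~\ref{step:minmax}. Concretely, define the operator $T$ on value vectors in $\R^{|\bS|}$ by setting $T(V)(s) = \beta$ for $s \in \{\phi, \tau_{\sA}\}$, $T(V)(s) = 0$ for $s \in \{\tau_{\sB}\} \cup \D$, and
$$T(V)(s) \;=\; \max_{p_{\sD} \in \bp_{\sD}} \min_{a \in \A_{\sA}(s)} \sum_{s' \in \bS} \sum_{d \in \A_{\sD}(s)} p_{\sD}(s,d)\, P(s,a,d,s')\, V(s')$$
for every non-absorbing $s$. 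By construction, $V^{(k+1)} = T(V^{(k)})$ for $k \geqslant 1$, so it suffices to verify (i) the base case $V^{(1)} \geqslant V^{(0)}$ componentwise, and (ii) monotonicity of $T$, i.e.\ if $V \geqslant V'$ componentwise then $T(V) \geqslant T(V')$ componentwise.

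For the base case, the initialization in step~\ref{step:init} sets $V^{(0)}(s) = 0$ for every $s \in \bS$, while $V^{(1)}(s) = \beta$ on $\{\phi, \tau_{\sA}\}$ and $V^{(1)}(s) = 0$ elsewhere. Since $\beta \geqslant 0$ (it is the reward paid at the absorbing states), the inequality $V^{(1)}(s) \geqslant V^{(0)}(s)$ holds at every state. For the inductive step, I will assume $V^{(k)}(s') \geqslant V^{(k-1)}(s')$ for every $s' \in \bS$ and establish $V^{(k+1)}(s) \geqslant V^{(k)}(s)$. At the absorbing states both sides are equal by definition (they are pinned to $\beta$ or $0$ independently of the previous iterate), so only the non-absorbing states require an argument.

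For a non-absorbing state $s$, monotonicity of $T$ follows from three observations applied in order: for any fixed $p_{\sD} \in \bp_{\sD}$ and any fixed $a \in \A_{\sA}(s)$, the inner expression $\sum_{s',d} p_{\sD}(s,d) P(s,a,d,s') V(s')$ is a non-negative linear combination of the entries of $V$ (because both $p_{\sD}(s,d) \geqslant 0$ and $P(s,a,d,s') \geqslant 0$), hence it is entrywise non-decreasing in $V$; the pointwise minimum over $a \in \A_{\sA}(s)$ of a family of entrywise non-decreasing functions of $V$ remains non-decreasing; and the pointwise maximum over $p_{\sD} \in \bp_{\sD}$ of a family of non-decreasing functions is again non-decreasing. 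Chaining these with the inductive hypothesis gives $V^{(k+1)}(s) = T(V^{(k)})(s) \geqslant T(V^{(k-1)})(s) = V^{(k)}(s)$, completing the induction. The only subtle point is to keep the quantifiers in the correct order in the monotonicity step and to use non-negativity of $P$ and $p_{\sD}$ explicitly; beyond that the argument is routine, and no contraction or discounting assumption is needed at this stage (those will only be invoked when establishing convergence of the sequence $\{V^{(k)}\}$ in the subsequent theorem).
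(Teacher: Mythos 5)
Your proof is correct and follows essentially the same route as the paper's: induction on the iteration index, with the key step being monotonicity of the max--min update in the value vector, which rests on the non-negativity of $p_{\sD}(s,d)$ and $P(s,a,d,s')$. The only cosmetic difference is that you argue monotonicity of the operator abstractly (min and max of non-decreasing functions are non-decreasing), whereas the paper instantiates the specific maximizing policy $p^{(k)}_{\sD}$ to chain the inequalities; the two arguments are interchangeable.
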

\begin{proof}
We first prove that result for a state $s\in \{\phi, \tau_{\sA},\tau_{\sB} \}\cup \D$. From Eq.~\eqref{eq:val-iterate}, for every iteration $k=1,2,\ldots$ $V^{(k)}(s)=\beta$, for $s\in \{\phi, \tau_{\sA}\}$, and $V^{(k)}(s)=0$, for $s\in \{\tau_{\sB}\}\cup \D$. From the initialization step of Algorithm~\ref{alg:value} (Step~\ref{step:init}),   $V^{(0)}(s)=0$ for all $s \in {\bf S}$. Thus for any state $s$, where $s\in \{\phi, \tau_{\sA},\tau_{\sB} \}\cup \D$, $V^{(k+1)}(s) \geqslant V^{(k)}(s)$ for any arbitrary iteration $k$ of Algorithm~\ref{alg:value}.

For an arbitrary state $s$, where $s\in {\bf S} \setminus \{\phi, \tau_{\sA},\tau_{\sB} \}\cup \D$, we prove the result using an induction argument. The induction hypothesis is that arbitrary iterations $k$ and $k+1$ satisfy  $V^{(k+1)}(s) \geqslant V^{(k)}(s)$, for all $s\in {\bf S} \setminus \{\phi, \tau_{\sA},\tau_{\sB} \}\cup \D$. 

\noindent Base step: Consider $k=0$ as the base step. Initialize $V^{(0)}(s)=0$ for all $s\in {\bf S} \setminus \{\phi, \tau_{\sA},\tau_{\sB} \}\cup \D$ and  set $V^{(1)}(s) = 0$ for all $s\in {\bf S} \setminus \{\phi, \tau_{\sA},\tau_{\sB} \}\cup \D$. This gives $V^{(1)}(s) \geqslant V^{(0)}(s)$, for all $s\in {\bf S} \setminus \{\phi, \tau_{\sA},\tau_{\sB} \}\cup \D$.

\noindent Induction step: For the induction step, assume that iteration $k$ satisfies $V^{(k)}(s) \geqslant V^{(k-1)}(s)$ for all $s\in {\bf S} \setminus \{\phi, \tau_{\sA},\tau_{\sB} \}\cup \D$.

Consider iteration $(k+1)$. Then,
\begin{eqnarray}
\hspace*{-8 mm} V^{(k+1)}(s) \hspace*{-2.5 mm}&=& \hspace*{-4 mm} \max\limits_{p_{\sD} \in {\bf P}_{\sD}}\min\limits_{a \in \A_{\sA}(s)}\sum\limits_{s' \in {\bf S}}\sum\limits_{d \in \A_{\sD}(s)}\hspace*{-3 mm} p_{\sD}(s, d)P(s,a,d,s')V^{(k)}(s')\nonumber\\
\hspace*{-4.5 mm} &\geqslant &\hspace*{-4 mm} \min\limits_{a \in \A_{\sA}(s)}\sum\limits_{s' \in {\bf S}}\sum\limits_{d \in \A_{\sD}(s)}\hspace*{-3 mm} p^{(k)}_{\sD}(s, d)P(s,a,d,s')V^{(k)}(s')\label{eq:ind_2}\\
\hspace*{-4.5 mm} &\geqslant &\hspace*{-4 mm} \min\limits_{a \in \A_{\sA}(s)}\sum\limits_{s' \in {\bf S}}\sum\limits_{d \in \A_{\sD}(s)}\hspace*{-3 mm} p^{(k)}_{\sD}(s, d)P(s,a,d,s')V^{(k-1)}(s')\label{eq:ind_3}\\
\hspace*{-4.5 mm} &= &\hspace*{-4 mm}  \max\limits_{p_{\sD} \in {\bf P}_{\sD}}\min\limits_{a \in \A_{\sA}(s)}\sum\limits_{s' \in {\bf S}}\sum\limits_{d \in \A_{\sD}(s)}\hspace*{-3 mm} p_{\sD}(s, d)P(s,a,d,s')V^{(k-1)}(s')\nonumber\\
\hspace*{-0.1 mm} &=&  \hspace*{-2 mm}V^{(k)}(s)\label{eq:ind_4}
  \end{eqnarray}

Eq.~\eqref{eq:ind_2} holds as the value obtained using a maximizing policy is at least as the value obtained using a specific policy $p_{\sD}$. Eq.~\eqref{eq:ind_3} follows from the induction argument and Eq.~\eqref{eq:ind_4} is from the definition of  $V^{(k)}(s)$. This completes the proof.
\end{proof}

\begin{prop}[Monotone Convergence Theorem, \cite{real-analysis}]\label{prop:real}
 If a sequence is monotone increasing and bounded from above,
then it is a convergent sequence.
\end{prop}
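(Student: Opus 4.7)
The plan is to invoke the supremum (least upper bound) property of $\mathbb{R}$ and identify the limit as the supremum of the range of the sequence. Let $\{a_n\}_{n \in \mathbb{N}}$ denote the monotone increasing sequence that is bounded above. First I would form the set $S := \{a_n : n \in \mathbb{N}\}$; this is nonempty and bounded above by hypothesis, so by the completeness axiom of $\mathbb{R}$ it admits a least upper bound
\begin{equation*}
L := \sup S.
\end{equation*}
The candidate limit is $L$, and the remaining task is to verify that $a_n \to L$ directly from the $\epsilon$--$N$ definition.

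Next I would fix an arbitrary $\epsilon > 0$ and argue as follows. Since $L - \epsilon < L$, the number $L - \epsilon$ is not an upper bound of $S$, so by the definition of supremum there exists an index $N$ with $a_N > L - \epsilon$. By monotonicity of the sequence, for every $n \geqslant N$ we have $a_n \geqslant a_N > L - \epsilon$, while the fact that $L$ upper-bounds $S$ gives $a_n \leqslant L < L + \epsilon$. Combining these two inequalities yields $|a_n - L| < \epsilon$ for all $n \geqslant N$, which is exactly the statement $\lim_{n \to \infty} a_n = L$.

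The argument is essentially immediate once completeness of $\mathbb{R}$ is invoked, so there is no real obstacle here; the only subtle point is that the proof rests on the least upper bound axiom, without which the statement can fail (in $\mathbb{Q}$, for instance, the truncated decimal expansions of an irrational form a monotone increasing sequence bounded above but with no rational limit). For the way this proposition is used in the paper, the result will be applied coordinate-wise to the iterates $V^{(k)}(s)$, which by Lemma~\ref{lem:value} are monotone increasing in $k$ and, being probabilities multiplied by $\beta$, are uniformly bounded above by $\beta$; so the one-dimensional statement above suffices to guarantee pointwise convergence of $V^{(k)}$ as $k \to \infty$.
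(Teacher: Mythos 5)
Your proof is correct: it is the standard textbook argument via the least upper bound property of $\mathbb{R}$, taking $L = \sup\{a_n\}$ and verifying convergence from the $\epsilon$--$N$ definition. The paper does not prove this proposition at all --- it is quoted as a known result with a citation to a real analysis reference --- so there is nothing to compare against; your argument is the canonical one, and your closing remark correctly identifies how the result is applied in the paper (coordinate-wise to the iterates $V^{(k)}(s)$, which are monotone by Lemma~\ref{lem:value} and bounded above by $\beta$).
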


The value of any state $s \in {\bf S}$ is bounded above by $\beta$. Using  Lemma~\ref{lem:value} and Proposition~\ref{prop:real}, we  state the  convergence result of Algorithm~\ref{alg:value}.

\begin{theorem}\label{th:value-conv}
Consider the value iteration algorithm in Algorithm~\ref{alg:value}. Let $V^{(k)}(s), V^{\*}(s)$ be the value at iteration $k$ and the optimal value of state $s \in {\bf S}$, respectively. Then, as $k\rightarrow \infty$, $V^{(k)}(s) \rightarrow V^{\*}(s)$, for all $s \in {\bf S}$. Further, the output of Algorithm~\ref{alg:value},  $\hat{p}_{\sD}$,  for $\delta \rightarrow 0$, is an optimal defender policy.
\end{theorem}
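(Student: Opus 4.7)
\textbf{Proof proposal for Theorem~\ref{th:value-conv}.}

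The plan is to combine Lemma~\ref{lem:value} with a boundedness argument to invoke the Monotone Convergence Theorem (Proposition~\ref{prop:real}), then pass to the limit inside the minimax recursion to identify the limiting value vector with $V^{\*}$, and finally read off an optimal defender policy from the argmax in the limit.

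First, I would establish that the iterates are uniformly bounded. A short induction on $k$ shows that $0\leqslant V^{(k)}(s)\leqslant \beta$ for every $s\in {\bf S}$: the base case is Step~\ref{step:init}, and for the inductive step the recursion in Eq.~\eqref{eq:val-iterate} expresses $V^{(k+1)}(s)$ as a convex combination (via transition probabilities and $p_{\sD}(s,\cdot)$) of the previous values, which lie in $[0,\beta]$ by hypothesis, together with the fixed boundary values $0$ and $\beta$ at absorbing states. Combined with monotonicity from Lemma~\ref{lem:value}, Proposition~\ref{prop:real} yields a pointwise limit $V^{\infty}(s):=\lim_{k\to\infty} V^{(k)}(s)$ for each $s\in {\bf S}$.

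Next, I would show that $V^{\infty}$ is a fixed point of the minimax Bellman operator. The map $V\mapsto \mathrm{val}(s,V)$ is continuous in $V$ on the compact box $[0,\beta]^{|{\bf S}|}$, since it is the value of a finite matrix game whose entries $\sum_{s'}P(s,a,d,s')V(s')$ depend linearly on $V$, and the value of a finite matrix game is a continuous function of its entries. Passing to the limit on both sides of the recursion $V^{(k+1)}(s)=\mathrm{val}(s,V^{(k)})$ for non-absorbing $s$, and using that the boundary values are preserved at every iteration, gives $V^{\infty}(s)=\mathrm{val}(s,V^{\infty})$ with the correct boundary conditions, i.e.\ $V^{\infty}$ satisfies Eq.~\eqref{eq:value}.

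I would then identify $V^{\infty}$ with $V^{\*}$. By Proposition~\ref{prop:NE}, $V^{\*}$ exists, and by the same fixed-point equation (Shapley-style dynamic programming characterization of the value of a constant-sum stochastic game with absorbing-state rewards) $V^{\*}$ is also a solution of Eq.~\eqref{eq:value}. Because the game terminates upon hitting the absorbing set $\{\phi,\tau_{\sA},\tau_{\sB}\}\cup\D$, every trajectory under any strategy pair eventually accumulates reward only at an absorbing state, and the expected payoff admits the unique fixed-point characterization used above; hence $V^{\infty}=V^{\*}$. Finally, for the policy part, once $\max_s |V^{(k)}(s)-V^{\*}(s)|$ is made arbitrarily small by choosing $\delta\to 0$, continuity of $\mathrm{val}$ implies that the matrix games defining $\hat p_{\sD}(s)$ converge to those defining an optimal minimax policy, and any accumulation point of $\hat p_{\sD}$ attains the maximum in Eq.~\eqref{eq:policy} with $V^{\*}$, hence is an optimal defender strategy.

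The main obstacle I anticipate is rigorously justifying that $V^{\infty}=V^{\*}$ in this undiscounted, total-reward setting: a direct Banach contraction argument is not available because the Bellman operator need not be a strict contraction without discounting. I would handle this by exploiting the absorbing structure of ${\bf S}$, namely that from every non-absorbing state the game reaches $\{\phi,\tau_{\sA},\tau_{\sB}\}\cup\D$ with probability one in finite expected time under any strategy pair (which can be argued from the acyclic/terminating structure of the IFG-induced state graph together with the fact that any non-trivial trap attempt has a strictly positive probability of terminating the game), so the total-reward operator behaves as if it were discounted when restricted to value vectors agreeing with the boundary data, yielding the required uniqueness of the fixed point.
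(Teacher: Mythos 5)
Your first half is exactly the paper's proof: the paper establishes monotonicity via Lemma~\ref{lem:value}, notes the uniform bound $V^{(k)}(s)\leqslant \beta$, and invokes the Monotone Convergence Theorem (Proposition~\ref{prop:real}) to conclude that $V^{(k)}(s)$ converges --- and then simply asserts that the limit is $V^{\*}(s)$. Where you differ is that you recognize this last identification as the real content of the theorem and try to supply it: you pass to the limit in the recursion using continuity of the matrix-game value to show the limit $V^{\infty}$ is a fixed point of Eq.~\eqref{eq:value}, and then argue that the fixed point is unique. That limit-passage step is sound (the value of a finite matrix game is nonexpansive in its entries, and the state space is finite), and it fills in work the paper's two-line proof omits.

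The step that does not go through as written is the uniqueness claim. You justify it by asserting that the absorbing set $\{\phi,\tau_{\sA},\tau_{\sB}\}\cup\D$ is reached with probability one in finite expected time \emph{under any strategy pair}. In the generality in which Theorem~\ref{th:value-conv} is stated this is false: Assumption~\ref{asm:IFG} (acyclicity) is only imposed \emph{after} this theorem, Algorithm~\ref{alg:value} is explicitly intended for (and in Section~\ref{sec:sim} is run on) a cyclic state space, and if the state graph has a cycle then the pair ``defender always plays $d=0$, attacker cycles forever'' never terminates --- the paper's own proof of Proposition~\ref{prop:NE} singles out precisely this strategy as the exception to nonzero stopping probability. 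Consequently the Bellman operator is not a contraction and your discounting analogy does not yield uniqueness; in undiscounted total-reward games the limit of value iteration from below can in general undershoot the value. What rescues the statement is the reachability structure of the payoff (the defender is rewarded only upon absorption in $\{\tau_{\sA},\phi\}$), for which convergence of value iteration from the zero vector to the game value is a known but nontrivial result (Everett's recursive games, concurrent reachability games). A clean self-contained fix is to either restrict to the acyclic case, where Theorem~\ref{th:finite} gives termination in at most $N+4$ steps and $V^{(k)}=V^{\*}$ after finitely many iterations, or to cite that literature rather than a contraction-style uniqueness argument. The same caveat applies to your final policy step, which presupposes $V^{\infty}=V^{\*}$.
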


The value of any state $s \in {\bf S}$ is bounded above by $\beta$. From Lemma~\ref{lem:value} we know that the sequence $V^{(k)}(s)$ is a monotonically increasing sequence. By the monotone convergence theorem \cite{real-analysis}, a bounded and monotone sequence converges to the supremum, i.e., $\lim_{k \rightarrow \infty} V^{(k)}(s) \rightarrow V^{\*}(s)$, for all $s \in {\bf S}$. Thus the value iteration algorithm converges and the proof follows.
%
%
%

\begin{theorem}\label{th:exi}
For any $\epsilon >0$, Algorithm~\ref{alg:value} returns an $\epsilon$-Nash equilibrium of the APT-DIFT game.
\end{theorem}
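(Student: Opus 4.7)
The plan is to leverage Theorem~\ref{th:value-conv}, which already establishes that the iterates $V^{(k)}$ produced by Algorithm~\ref{alg:value} converge monotonically to the saddle-point value $V^{\*}$. Since the state space ${\bf S}$ is finite, pointwise convergence upgrades for free to uniform convergence in the $\ell_\infty$ norm, so for any $\epsilon > 0$ there exists an iteration index $K(\epsilon)$ such that $\max_{s \in {\bf S}}\bigl(V^{\*}(s) - V^{(k)}(s)\bigr) \leqslant \epsilon/2$ whenever $k \geqslant K(\epsilon)$. Combined with Lemma~\ref{lem:value} and the uniform upper bound $V^{(k)}(s)\leqslant \beta$, this also implies that the consecutive-iterate gap $\max_{s}\bigl(V^{(k+1)}(s)-V^{(k)}(s)\bigr)$ tends to zero, so the stopping criterion in Step~\ref{step:del} is met in finitely many iterations for any chosen $\delta > 0$.

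First I would pick the tolerance $\delta = \delta(\epsilon)$ small enough so that termination at some iteration $k^\*$ forces $\|\hat V - V^{\*}\|_\infty \leqslant \epsilon/2$. Concretely, using monotonicity of $\{V^{(k)}\}$ and boundedness by $\beta$, a Cauchy-type argument on the finite-state, absorbing Markov chain induced by any strategy pair (every play reaches $\{\phi,\tau_{\sA},\tau_{\sB}\}\cup\D$ almost surely) shows $\|V^{(k)} - V^{\*}\|_\infty$ can be controlled by $\delta$ up to a constant depending on the mixing/absorption rate of the game. Thus one obtains $\|\hat V - V^{\*}\|_\infty \leqslant \epsilon/2$ at termination.

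Next I would convert this value-function closeness into a policy-level guarantee. Let $\hat p_{\sD}$ be the defender strategy extracted in the last line of Algorithm~\ref{alg:value}, and let $p^{\hat{\*}}_{\sA}$ be a best response of the adversary against $\hat p_{\sD}$. By substituting $\hat V$ for $V^{\*}$ in the one-step minimax operator of Eqs.~\eqref{eq:value}--\eqref{eq:policy} and using Lipschitz continuity of that operator in $V$, I would derive
\begin{equation*}
U_{\sD}(v_0,p^{\hat{\*}}_{\sA},\hat p_{\sD}) \;\geqslant\; \mbox{val}(v_0,\hat V) - \tfrac{\epsilon}{2} \;\geqslant\; V^{\*}(v_0) - \epsilon,
\end{equation*}
which is exactly the condition in Definition~\ref{def:epsilon-NE}. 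Hence $(p^{\hat{\*}}_{\sA},\hat p_{\sD})$ is an $\epsilon$-Nash equilibrium.

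The main obstacle is the passage from the algorithm's stopping rule (a consecutive-iterate gap of size $\delta$) to an absolute approximation bound $\|\hat V - V^{\*}\|_\infty \leqslant \epsilon/2$. Unlike the discounted case, there is no built-in contraction factor, so the bound must come from the absorbing structure of the chain: with probability one the game enters $\{\phi,\tau_{\sA},\tau_{\sB}\}\cup\D$ in finitely many steps, and this yields an effective contraction on the transient states that converts $\delta$-level stability into $\epsilon$-level optimality. Once that quantitative link is in place, the subsequent step from value closeness to strategy closeness is a standard one-step deviation argument.
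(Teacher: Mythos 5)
Your high-level route is the same as the paper's: the paper proves this theorem in one line by citing Lemma~\ref{lem:value} (monotonicity), Theorem~\ref{th:value-conv} (convergence $V^{(k)}\to V^{\*}$), and Definition~\ref{def:epsilon-NE}, i.e., since ${\bf S}$ is finite and $V^{(k)}(s)\uparrow V^{\*}(s)$, for any $\epsilon>0$ there is an iteration index $K(\epsilon)$ beyond which the returned pair satisfies the $\epsilon$-NE inequality. Up to that point your argument is fine, and your observation that the while-loop terminates for any $\delta>0$ (monotone bounded sequences are Cauchy on a finite state set) is a correct detail the paper omits.

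The genuine gap is in the step you yourself flag as the main obstacle, and your sketch does not close it. You need a computable relation $\delta(\epsilon)$ such that termination with consecutive-iterate gap $\leqslant\delta$ forces $\|\hat V - V^{\*}\|_\infty\leqslant\epsilon/2$. In the undiscounted total-reward setting the minimax operator is not a contraction, so ``$\delta$-level stability implies $\epsilon$-level optimality'' does not follow from a Cauchy argument: a monotone sequence can have arbitrarily small consecutive increments while still being far from its limit, and the algorithm's stopping rule sees only the increment, not the distance to $V^{\*}$. Your proposed repair---an effective contraction from almost-sure absorption---requires a \emph{uniform} bound on absorption over all stationary strategy pairs, which fails when the IFG has cycles (Algorithm~\ref{alg:value} is precisely the variant used without Assumption~\ref{asm:IFG}): under the pair where the defender never traps and the adversary cycles, the play never reaches $\{\phi,\tau_{\sA},\tau_{\sB}\}\cup\D$, so there is no uniform absorption rate to power the Lipschitz/one-step-deviation estimate in your final display. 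The same lack of uniform absorption also undermines the error-accumulation control in that last inequality. The statement that is actually supported by Lemma~\ref{lem:value} and Theorem~\ref{th:value-conv} is the asymptotic one the paper intends: for every $\epsilon>0$ there exists an iteration (equivalently, a sufficiently small $\delta$, with $\delta\to 0$) after which the output is an $\epsilon$-NE; no explicit $\delta(\epsilon)$ is, or can straightforwardly be, given without further structure such as acyclicity (Theorem~\ref{th:finite}), under which the exact value is reached in finitely many steps anyway.
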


Proof of Theorem~\ref{th:exi} follows from Lemma~\ref{lem:value}, Theorem~\ref{th:value-conv}, and Definition~\ref{def:epsilon-NE}.

The value of $\delta$ trades off  the desired  accuracy of the value vector and  the computational time. For a given value of $\delta$, the number of iterations of Algorithm~\ref{alg:value} depends on the size and connectivity structure of the IFG.  As the size and connectivity structure of the IFG varies across different datasets, we base the selection of $\delta$ solely on the accuracy desired by the security expert. In our experiments, we set $\delta=0.01$. Theorem~\ref{th:complexity} presents the computational complexity  of Algorithm~\ref{alg:value} for each iteration. 
\begin{theorem}\label{th:complexity}
Consider the APT-DIFT game  with $N$  number of nodes in the IFG and $q$  number of destination nodes. Let $\A_{\sA}$ and $\A_{\sD}$ be the action sets of  APT and  DIFT, respectively. Every iteration of Algorithm~\ref{alg:value}, i.e., steps~\ref{step:del}-\ref{step:endwhile},  has computational complexity $O({(N-q+1)}^2|\A_{\sA}||\A_{\sD}|)$.
\end{theorem}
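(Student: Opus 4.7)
The plan is to trace through one execution of the while loop (Steps~\ref{step:del}--\ref{step:endwhile}) and account for the arithmetic work as a product of three factors: the number of states that actually get updated, the cost of constructing the inner matrix-game payoff matrix at a single state, and the cost of solving that matrix game.

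First I would identify the set of states for which Step 5 does nontrivial work. The absorbing states $\{\phi,\tau_{\sA},\tau_{\sB}\}\cup\D$ are skipped by the outer \textbf{for}-loop, and $|\D|=q$, so the number of non-absorbing states in $\bS=\{v_0,v_1,\dots,v_N,\phi,\tau_{\sA},\tau_{\sB}\}$ is exactly $N-q+1$ (namely $v_0$ and $v_{q+1},\dots,v_N$). Thus the inner body of Step 5 is executed $N-q+1$ times per iteration, which produces the first factor of $(N-q+1)$.

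Next, for a fixed non-absorbing state $s$ I would analyze the cost of $\mathrm{val}(s,V^{(k)})$. By the definition in Eq.~\eqref{eq:val-iterate} this is a two-player zero-sum matrix game whose payoff matrix $M\in\R^{|\A_{\sA}(s)|\times|\A_{\sD}(s)|}$ has entries
\begin{equation*}
M_{a,d}=\sum_{s'\in\bS}P(s,a,d,s')\,V^{(k)}(s').
\end{equation*}
Building the whole matrix $M$ therefore requires $|\A_{\sA}||\A_{\sD}|$ entries, each of which is an inner product against $V^{(k)}$ of length $|\bS|=O(N-q+1)$; this yields $O\bigl((N-q+1)|\A_{\sA}||\A_{\sD}|\bigr)$ operations to assemble $M$. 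Solving the resulting $|\A_{\sA}|\times|\A_{\sD}|$ minimax problem via the standard linear program of the form $\max_{p_{\sD},v}\{v:M^{\top}p_{\sD}\geqslant v\mathbf{1},\mathbf{1}^{\top}p_{\sD}=1,p_{\sD}\geqslant0\}$ is polynomial in $|\A_{\sA}|+|\A_{\sD}|$ and is absorbed by the matrix-construction cost (or at worst by the same bound $O((N-q+1)|\A_{\sA}||\A_{\sD}|)$, since every LP constraint coefficient was already touched). Multiplying by the $N-q+1$ states processed gives the claimed per-iteration bound $O\bigl((N-q+1)^{2}|\A_{\sA}||\A_{\sD}|\bigr)$.

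The main obstacle I anticipate is justifying that the matrix-game LP does not dominate the inner cost, because strictly polynomial-time LP algorithms carry higher-order polynomial factors in $|\A_{\sA}|+|\A_{\sD}|$. I would handle this either by citing a standard matrix-game solver whose running time is linear in the number of payoff entries for fixed precision, or by noting that in the implementation the LP is solved once the matrix is assembled and its size is already accounted for in the $|\A_{\sA}||\A_{\sD}|$ factor. A secondary subtlety is to confirm that the transition structure in Eq.~\eqref{eq:transition_D} does not inflate the inner sum over $s'$: since for any $(s,a,d)$ the support of $P(s,a,d,\cdot)$ has size at most two, the $O(N-q+1)$ bound on the inner product is actually loose, but it suffices for the stated complexity and avoids any case analysis.
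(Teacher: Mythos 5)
Your proposal is correct and follows essentially the same route as the paper: count the $N-q+1$ non-absorbing states, bound the per-state work by $O\bigl((N-q+1)|\A_{\sA}||\A_{\sD}|\bigr)$, and multiply. The only difference is bookkeeping --- the paper attributes the per-state cost entirely to the linear program (citing a standard reference), whereas you split it into matrix assembly plus LP solving and explicitly address why the solver does not dominate; that is a more careful accounting of the same bound, not a different argument.
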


\begin{proof}
Every iteration of Algorithm~\ref{alg:value} involves computation of the value vector. This involves solving, for every state $s \in {\bf S} \setminus \{\{\phi, \tau_{\sA}, \tau_{\sB} \} \cup \D\} $,  a linear program  of the form: 

\noindent {\it Maximize} $V(s)$\\
\begin{example}
\item[(1)]~$\sum\limits_{d \in \A_{\sD}(s)} p_{\sD}(s,d)=1$
\item[(2)]~$p_{\sD}(s,d) \geqslant 0$, for all $d \in \A_{\sD}(s)$
\item[(3)]~$V(s) \leqslant \sum\limits_{d \in \A_{\sD}(s)} Q(s,d,a)p_{\sD}(s,d)$, for all $a \in \A_{\sA}(s)$.
\end{example}

We note that $|{\bf S} \setminus \{\{\phi, \tau_{\sA}, \tau_{\sB} \} \cup \D\}|=N-q+1$. The above linear program has complexity of $O({(N-q+1)}|\A_{\sA}||\A_{\sD}|)$ \cite{boutilier1999decision}. Thus solving for all  $(N-q+1)$ states has a total complexity of $O({(N-q+1)}^2|\A_{\sA}||\A_{\sD}|)$.
\end{proof}

Algorithm~\ref{alg:value} is guaranteed to converge to an NE strategy asymptotically. When the state space of the APT-DIFT game has cycles, Algorithm~\ref{alg:value} updates the value vector  recursively    and consequently finite time convergence can not be guaranteed.   However, when the IFG is acyclic, the state space of the APT-DIFT game is acyclic (Theorem~\ref{th:finite}) and  a finite time convergence can be achieved using value iteration. 

Henceforth, the following assumption holds.

\begin{assume}\label{asm:IFG}
The IFG $\G$ is acyclic.
\end{assume}

The IFG obtained from the system log may not be acyclic in general. However, when the IFG is a cyclic graph, one can obtain an acyclic representation of the IFG   using the node versioning technique proposed in \cite{hossain2018dependence}. Throughout this subsection, we consider directed acyclic IFGs rendered using the approach in \cite{hossain2018dependence} and hence Assumption~\ref{asm:IFG} is non-restrictive.

 Theorem below  presents a termination  condition of the APT-DIFT game when the IFG is acyclic.
\begin{theorem}\label{th:finite}
Consider the  APT-DIFT game. Let Assumption~\ref{asm:IFG} holds and $N$ be the number of nodes in the IFG. Then the state space of the APT-DIFT game is acyclic and the game terminates in at most $N+4$  number of steps. 
\end{theorem}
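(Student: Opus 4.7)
The plan is to decouple the theorem into two claims and prove each via a direct structural argument. First I would show that the induced state-transition graph on $\bS$ is a directed acyclic graph, and then I would convert the acyclicity into a termination bound via a simple counting argument on $|\bS|$.

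For acyclicity, I would partition $\bS = \{v_0, v_1, \ldots, v_N, \phi, \tau_{\sA}, \tau_{\sB}\}$ into three classes and catalogue every transition permitted by Eq.~\eqref{eq:transition_D} together with the action sets $\A_{\sA}(\cdot), \A_{\sD}(\cdot)$. The virtual state $v_0$ has only outgoing edges, namely to nodes in $\lambda \subset V_{\G}$ (when the adversary commits to an entry point) or to $\phi$ (if the adversary aborts); no rule in the game ever returns to $v_0$. The states $\phi, \tau_{\sA}, \tau_{\sB}$ together with the targets $\D$ have empty action sets and hence carry no outgoing edges. Every remaining transition is among $V_{\G}\setminus\{v_0\}$ and takes the form $v_i \to v_j$ with $(v_i, v_j) \in E_{\G}$, regardless of whether the next state equals $a_t$ deterministically (case $d_t=0$) or with probability $FN(d_t)$ (case $d_t = a_t$) or with probability $1-FP(d_t)$ (case $d_t \neq a_t$). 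Consequently, any cycle in the state graph would lie entirely within $V_{\G}$, contradicting Assumption~\ref{asm:IFG}.

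For the termination bound, any realized trajectory $s_0, s_1, \ldots, s_T$ is a walk in this DAG starting at $v_0$ and ending the first time an absorbing state is reached. A walk in a DAG visits each vertex at most once, so its length is at most $|\bS|-1 = N+3$, giving $T \leqslant N+3 \leqslant N+4$.

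The only subtle point, and the main thing I would flag, is to confirm that no transition type creates a self-loop $v_i \to v_i$, since such a loop would break the DAG argument; this is ruled out because $v_i \to v_i$ would require $(v_i,v_i) \in E_{\G}$, which Assumption~\ref{asm:IFG} precludes. After this check, the remainder is routine counting on a finite DAG.
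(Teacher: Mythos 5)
Your proof is correct and follows essentially the same route as the paper: absorbing states have no outgoing edges, $v_0$ has no incoming edges, all remaining transitions run along edges of $E_{\G}$, so acyclicity of $\G$ forces acyclicity of $\bS$, and the bound then follows by counting states. Your counting is in fact marginally tighter ($T \leqslant N+3$ since a path in a DAG on $|\bS|=N+4$ vertices has at most $N+3$ steps), and your explicit check that no self-loop $v_i \to v_i$ can arise is a worthwhile detail the paper leaves implicit.
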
 
\begin{proof}
Consider any arbitrary strategy pair $(p_{\sA}, p_{\sD})$. We first prove the acyclic property of the state space of the game under $(p_{\sA}, p_{\sD})$.  The state space $\bS$ is constructed by augmenting the IFG with states $v_0, \phi$, $\tau_{\sA}$, and $\tau_{\sB}$.
We note that a state $s \in \{ \phi, \tau_{\sA}, \tau_{\sB}\} \cup \D$ does not lie in a cycle in $\bS$ as $s$ is an absorbing state and hence have no outgoing edge, i.e., $\A_{\sA}(s)=\A_{\sD}(s)=\emptyset$. The state $v_0$ does not lie in a cycle in $\bS$ as there are no incoming edges to $v_0$.  This concludes that a state $s \in \{v_0, \phi, \tau_{\sA}, \tau_{\sB}\}\cup \D$ is not part of a cycle in $\bS$. Thus a cycle can possibly exist in $\bS$ only if there is a cycle which has some states in  $v_{q+1}, \ldots, v_{N}$, since $\D = \{v_{1}, v_{2}, \ldots, v_q \}$. Recall that states $v_1, \ldots, v_{N}$ correspond to nodes of $\G$. As $\G$ is acyclic, there are no cycles in $\bS$. Since $\bS$ is acyclic under any arbitrary strategy pair $(p_{\sA}, p_{\sD})$ and the state space has finite cardinality, the game terminates in finite number of steps. Further, since $|\bS|=N+4$, $T \leqslant N+4$.
\end{proof}

Using Theorem~\ref{th:finite}, we propose a value iteration algorithm with guaranteed finite time convergence. We will use the following definition in our approach. 

  For a directed acyclic graph, topological ordering of the node set is defined below.
\begin{defn}[\cite{kahn1962topological}]
 A  topological ordering of a directed graph is a linear ordering of its vertices such that for every directed edge $(u, v)$ from vertex $u$ to vertex $v$, $u$ comes before $v$ in the ordering.
\end{defn}
For a directed acyclic graph with vertex set $B$ and edge set $E$ there exists an algorithm of complexity $O(|B| + |E|)$ to find the topological order \cite{kahn1962topological}. Using the topological ordering, one can find a  hierarchical level partitioning of the nodes of a directed acyclic graph.
   Let $\S$ be the topologically ordered set of nodes of $\bS$.  
 Let the number of hierarchical levels associated with $\S$ be $M$, say $L_1, L_2, \ldots, L_M$. Then $L_1=v_0$, $L_M = \{\phi, \tau_{\sA}, \tau_{\sB} \} \cup \D$. Moreover, 
 a state $s$ is said to be in hierarchical level $L_j$ if there exists an edge into $s$ from a state $s'$ which is in some level  $L_{j'}$, where $j'<j$, and there is no edge into $s$ from  a state $s''$ which is in level $L_{j''}$, where $j'' > j$.
 
 Algorithm~\ref{alg:value-2} presents the pseudocode to solve the APT-DIFT game using the topological ordering and the hierarchical levels, when the IFG is acyclic.  
 
\begin{algorithm}[t]
  \caption{Value iteration algorithm to find NE strategy for APT-DIFT game under Assumption~\ref{asm:IFG}} \label{alg:value-2}
  \begin{algorithmic}
\State \textit {\bf Input:} APT-DIFT game with state space ${\bf S}$, destination set $\D$, action space $\A_{\sD}, \A_{\sA}$, payoff parameter $\beta$, transition probabilities $FN(\cdot), FP(\cdot)$
\State \textit{\bf Output:} Value vector  $V^{\*}$ and  defender strategy $p^{\*}_{\sD}$
\end{algorithmic}
  \begin{algorithmic}[1]
  \State Find the topological ordering of the state space graph, $\S$
\State Using $\S$, obtain the set of nodes corresponding to hierarchical levels $L_1, L_2, \ldots, L_M$
  \State Initialize value vector $V(s) \leftarrow 0$, for all $s \in {\bf S}$ and $V(s') \leftarrow \beta$ for $s' \in \{\tau_{\sA}, \phi\}$, $V(s'') \leftarrow 0$ for all $s'' \in{\bf S}\setminus \{\tau_{\sA}, \phi\}$\label{step:init2}
  \For{$k \in \{M-1, M-2, \ldots, 1 \}$}
  \For{$s \in L_k$}
  \State {\scalefont{0.98}{\hspace*{-3 mm}$V(s) \leftarrow \max\limits_{p_{\sD} \in {\bf P}_{\sD}}\min\limits_{a \in \A_{\sA}(s)}\sum\limits_{s' \in {\bf S}}\sum\limits_{d \in \A_{\sD}(s)}\hspace*{-2 mm}p_{\sD}(s, d)P(s,a,d,s')V(s')$}}\label{step:minmax2}
  \EndFor
   \State $k \leftarrow k+1$
  \EndFor
  \Return Value vector $V^{\*} \leftarrow V$
\State Compute DIFT strategy $p^{\*}_{\sD}$,  $p^{\*}_{\sD}(s) \leftarrow \arg\max\limits_{p_{\sD} \in {\bf P}_{\sD}}\min\limits_{a \in \A_{\sA}(s)}\sum\limits_{s' \in {\bf S}}\sum\limits_{d \in \A_{\sD}(s)}p_{\sD}(s, d)P(s,a,d,s')V^{\*}(s')$
\end{algorithmic}
\end{algorithm}

\begin{cor}
Consider the APT-DIFT game and let $\A_{\sA}, \A_{\sD}$ be the action sets of  APT,  DIFT, respectively. Let the IFG is acyclic with  $N$  number of nodes and $q$  number of destination nodes.  Then, Algorithm~\ref{alg:value-2} returns the value vector $V^{\*}$. Moreover, Algorithm~\ref{alg:value-2} has computational complexity $O({(N-q+1)}^2|\A_{\sA}||\A_{\sD}|)$.
\end{cor}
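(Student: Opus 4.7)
The plan is to establish the two claims of the corollary separately: correctness of the returned value vector, and the stated per-algorithm complexity bound.

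For correctness, I would exploit the hierarchical level decomposition that is available because, by Theorem~\ref{th:finite}, the state-space graph $\bS$ is acyclic under Assumption~\ref{asm:IFG}. The key structural property is that if $s \in L_k$, then every successor $s'$ reachable via any joint action $(a,d)$ with $P(s,a,d,s')>0$ lies in some level $L_{k'}$ with $k' > k$; this is immediate from the definition of hierarchical levels induced by the topological ordering. I would then prove by backward induction on $k=M, M-1, \ldots, 1$ that after the $k$-th outer-loop iteration of Algorithm~\ref{alg:value-2} the value $V(s)$ equals $V^{\*}(s)$ for every $s \in L_k$. The base case $k=M$ holds because $L_M = \{\phi, \tau_{\sA}, \tau_{\sB}\} \cup \D$ consists of absorbing states whose values $\beta$ or $0$ are set directly at initialization (Step~\ref{step:init2}) and match the payoff definitions in Eqs.~\eqref{eq:Apayoff}--\eqref{eq:Dpayoff}. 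For the inductive step at level $k<M$, the minimax update in Step~\ref{step:minmax2} only references $V(s')$ for successors $s'$, which by the structural property all lie in higher-indexed levels and therefore carry their optimal values $V^{\*}(s')$ by the induction hypothesis. Substituting these into the Bellman-style operator of Eq.~\eqref{eq:val-iterate} gives exactly $V^{\*}(s)$, since this is the same fixed-point equation whose unique solution Theorem~\ref{th:value-conv} identifies as the game value. The recovered defender strategy $p^{\*}_{\sD}$ is then an equilibrium strategy by the argument in Eq.~\eqref{eq:policy}.

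For the complexity bound, I would reuse the per-state linear-program analysis already established in the proof of Theorem~\ref{th:complexity}. Computing the topological ordering on $\bS$ takes $O(|\bS| + |E_{\bS}|)$ time by Kahn's algorithm, which is dominated by the subsequent updates. Algorithm~\ref{alg:value-2} visits each non-absorbing, non-$v_0$ state exactly once (because of the hierarchical traversal, no recursive updates are needed), and there are $N-q+1$ such states in $\bS \setminus (\{\phi,\tau_{\sA},\tau_{\sB}\} \cup \D)$. At each such state the minimax problem reduces to the same linear program described in the proof of Theorem~\ref{th:complexity}, of complexity $O((N-q+1)|\A_{\sA}||\A_{\sD}|)$. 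Multiplying the per-state cost by the $N-q+1$ states yields the claimed bound $O((N-q+1)^2|\A_{\sA}||\A_{\sD}|)$.

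The main obstacle I anticipate is making rigorous the claim that a single backward pass through the levels is sufficient, i.e., no revisits are required. This reduces to carefully verifying that the hierarchical-level successor property ``$s \in L_k$ implies every reachable $s'$ lies in $L_{k'}$ with $k'>k$'' follows from the topological ordering definition applied to the augmented state graph, and that the virtual start state $v_0$ and the absorbing states do not disturb this ordering. Once this structural lemma is in hand, both the correctness induction and the single-pass complexity accounting follow cleanly, and the corollary reduces to a specialization of Theorems~\ref{th:value-conv} and~\ref{th:complexity} to the acyclic case where asymptotic convergence becomes exact termination.
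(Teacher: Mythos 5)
Your proposal follows essentially the same route as the paper's proof: backward dynamic programming over the hierarchical levels starting from the absorbing states at level $M$, combined with the per-state linear-program cost of $O((N-q+1)|\A_{\sA}||\A_{\sD}|)$ from the proof of Theorem~\ref{th:complexity} multiplied over the $N-q+1$ non-absorbing states. Your version merely makes explicit the backward induction and the successor-level property that the paper's proof treats as immediate, so it is a correct and somewhat more rigorous rendering of the same argument.
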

\begin{proof}
Recall that under Assumption~\ref{asm:IFG}, the state space ${\bf S}$ of the APT-DIFT game is acyclic. Thus the topological ordering $\S$ and the hierarchical levels, $L_1, \ldots, L_M$,  of ${\bf S}$ can be obtained in polynomial time \cite{kahn1962topological}.  We note that, values at the absorbing states, i.e., states at hierarchical level $M$, are   $V(s) = \beta$ for $s \in \{\tau_{\sA}, \phi\}$ and $V(s') = 0$ for all $s' \in \{\tau_{\sB}\} \cup \D$.   Using the hierarchical levels, the value vector can be computed in a dynamic  programming manner starting from states in the last hierarchical level. Initially, using the values of the states at level $M$, the values of states at level $M-1$ can be obtained. Similarly, using values of states at level $M-1$ and level $M$, the values of states at level $M-2$ can be obtained and so on. By  recursive computations we obtain the value vector $V^{\*}$ and the corresponding  DIFT strategy $p^{\*}_{\sD}$.

  Finding the topological ordering and the corresponding hierarchical levels of  the state space graph has $O(|{\bf S}|^2)$ computations. The linear program associated with each state in the value iteration involves $O({(N-q+1)}|\A_{\sA}||\A_{\sD}|)$ computations (See proof of Theorem~\ref{th:complexity}). Since we need to compute values corresponding to $(N-q+1)$ states,  complexity of Algorithm~\ref{alg:value-2} is $O({(N-q+1)^2}|\A_{\sA}||\A_{\sD}|)$.
\end{proof}

\subsection{Hierarchical Supervised Learning (HSL) Algorithm}
This subsection  presents our approach to solve the APT-DIFT game when the game model is not fully  known. It is often unrealistic to know the precise values of the false-positive rates and the false-negative rates of DIFT as these values are  estimated empirically. More specifically, the transition probabilities of the APT-DIFT game are unknown.  The traditional reinforcement learning algorithms (Q-learning)  for constant-sum games with unknown transition probabilities \cite{hu1998multiagent} assume perfect information, i.e., the players can observe the past actions and rewards of the opponents  \cite{hu1998multiagent}. 

On the other hand, dynamic programming-based  approaches, including value iteration and policy iteration algorithms, require the transition probabilities in the computations.  When $FN(\cdot)$ and $FP(\cdot)$ values are unknown,  the optimization problem associated with the states in step~\ref{step:minmax} of  Algorithm~\ref{alg:value} and step~\ref{step:minmax2} of Algorithm~\ref{alg:value-2} can not be solved. That is, in the LP
 
\begin{prob}\label{prob:LP}
\noindent {\it Maximize} $V(s)$\\
\begin{example}
\item[(1)]~$\sum\limits_{d \in \A_{\sD}(s)} p_{\sD}(s,d)=1$
\item[(2)]~$p_{\sD}(s,d) \geqslant 0$, for all $d \in \A_{\sD}(s)$
\item[(3)]~$V(s) \leqslant \sum\limits_{d \in \A_{\sD}(s)} Q(s,d,a)p_{\sD}(s,d)$, for all $a \in \A_{\sA}(s)$,
\end{example}
\end{prob}
the $Q(s,a,d)$ values are unknown, where

 $$Q(s,a,d) = \sum_{s' \in {\bf S}}P(s,a,d,s')V(s'),$$ and $P(s,a,d,s')$ is defined  in Eq.~\eqref{eq:transition_D}. To the best of our knowledge, there are no known algorithms, with guaranteed equilibrium convergence, to solve imperfect and incomplete stochastic game models as that of the APT-DIFT game presented in this paper.
 To this end, we propose a supervised learning-based approach to solve the APT-DIFT game.  Our approach consists of two key steps.
 
 \begin{enumerate}
\item[(1)] Training a neural network to approximate the value vector of the game for a given strategy pair, and  \item[(2)]  A policy iteration algorithm to compute an $\epsilon$-optimal NE strategy.
   \end{enumerate}
   
%
%
 
 Our approach to solve the APT-DIFT game, when the rates of false-positives and false-negatives are unknown, utilizes the topological ordering and the hierarchical levels of the state space graph. 
  We propose a hierarchical supervised learning (HSL)-based approach, that predicts the $Q$-values of the APT-DIFT game and then solve the LP for all the states (Problem~\ref{prob:LP}) in a hierarchical manner, to approximate an NE strategy  of the APT-DIFT game. 
  In HSL, we utilize a neural network to obtain a mapping between the strategies of the players and the value vector. This mapping is an approximation, using which HSL presents an approach to compute an approximate equilibrium and evaluates the performance empirically using real attack datasets.
  
  In order to predict the $Q$-values of the game, we train a neural network to learn the value vector of the game and use the trained model to predict the $Q$-values. The reasons to train an NN to learn the value vector instead of directly learning the $Q$-values are: (a)~while the value vector is of dimension $|\bf S|$ the $Q$-values have dimension $|{\bf S}||\A_{\sA}||\A_{\sD}|$ and (b)~generation of data samples for value vector is easy. The approach for data generation and training is elaborated below.

 We note that, it is possible to simulate the APT-DIFT game and observe the final game outcome, i.e., the payoffs of the players.  For a given $(p_{\sA}, p_{\sD})$,  the value at state $s$, $V(s)$, is the payoff of the defender if the game originate at state $s$, i.e.,  $V(s) = U_{\sD}(s, p_{\sA}, p_{\sD})$.   Training the neural network for predicting the value vector of the APT-DIFT game consists of two steps.
\begin{enumerate}
\item[(i)] Generate random samples of strategy pairs, $(p_{\sA}, p_{\sD})$ 
\item[(ii)] Simulate the APT-DIFT game for each of the randomly generated sample of strategy pair and obtain the values corresponding to all states.
  \end{enumerate}  
%
%

The neural network takes as input the strategy pairs and outputs the value vector. The training is done using a multi-input, multi-output neural network, represented as $\F: X \rightarrow Y$, where $X 
\subseteq [0,1]^{|\A_{\sA}|} \times [0,1]^{|\A_{\sD}|}$ and $Y \subseteq \mathbb{R}^{|{\bf S}|}$. The neural network may not compute the exact value vector, however, it can approximate the value vector to arbitrary  accuracy.
  Given a function $f(x)$ and $\xi >0$, the guarantee is that by
 using enough hidden neurons it is  always possible to find a neural network whose output $g(x)$ 
  satisfies $|f(x)-g(x)|< \xi $, for all inputs $x$ \cite{saji}. In other words, the approximation will be good to within the desired accuracy for every possible input. However, the training method does not guarantee that the neural network obtained at the end of the training process is one that satisfies the specified level of accuracy. 
   Using the trained neural network we predict the $Q$-values of the game.
  \begin{lemma}\label{lem:NN}
  Consider the APT-DIFT game with state space ${\bf S}$. Let a neural network be trained using samples of strategy pairs $(p_{\sA}, p_{\sD})$ to predict the value vector $V$ of the game such that the mean absolute error is within the desired tolerance of $\zeta \leqslant 0.01$. Then, the trained neural network also yield the $Q$-values, $Q(s,a,d)$, for all $s \in {\bf S}$, $a \in \A_{\sA}$, and $d \in \A_{\sD}$.
  \end{lemma}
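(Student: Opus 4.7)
The plan is to reduce computation of each $Q$-value to a single query of the trained network on a suitably perturbed strategy pair. By definition,
\[
Q(s,a,d) \;=\; \sum_{s'\in {\bf S}} P(s,a,d,s')\,V(s'),
\]
where $V$ is the value vector corresponding to the current strategy pair $(p_{\sA}, p_{\sD})$. The right-hand side can be recognized as the value at state $s$ under a \emph{modified} strategy pair that places unit mass on $a$ and $d$ at $s$ while retaining $(p_{\sA}, p_{\sD})$ at every other state.

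Concretely, for each triple $(s,a,d)$ I would construct $(\tilde p_{\sA}^{\,s,a}, \tilde p_{\sD}^{\,s,d})$ which differs from $(p_{\sA}, p_{\sD})$ only at state $s$, where it is deterministic on $a$ and $d$ respectively. Under Assumption~\ref{asm:IFG}, Theorem~\ref{th:finite} guarantees that the state-space graph is acyclic, so once play leaves $s$ it never returns. Consequently, the values of every successor $s'\neq s$ under the modified pair coincide with the original $V(s')$, and the $s$-th entry of the value vector for the modified pair equals $\sum_{s'} P(s,a,d,s')\,V(s')=Q(s,a,d)$.

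The $Q$-value is then recovered from the trained network $\F$ by evaluating $\F(\tilde p_{\sA}^{\,s,a}, \tilde p_{\sD}^{\,s,d})$ and reading off the $s$-th component. Because $\F$ approximates the value vector with mean absolute error at most $\zeta\le 0.01$, the resulting estimate of $Q(s,a,d)$ inherits the same tolerance. Iterating over all triples $(s,a,d)$ with $s\in {\bf S}$, $a\in\A_{\sA}(s)$, $d\in\A_{\sD}(s)$ then produces every $Q$-value needed in Problem~\ref{prob:LP}.

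The principal obstacle is the claim that perturbing the strategies only at $s$ leaves the values at the remaining states untouched; this is precisely where acyclicity of the state space is indispensable. Without it, modifying the play at $s$ could propagate through cycles back into $s$ itself, breaking the identity $V^{\mathrm{mod}}(s)=Q(s,a,d)$ and forcing one to retrain (or re-evaluate) the network on a globally perturbed strategy. Given acyclicity, the remaining steps---one neural-network query per triple together with a direct appeal to the approximation bound---are immediate.
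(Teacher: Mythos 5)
Your proposal is correct and takes essentially the same route as the paper: both recover $Q(s,a,d)$ by feeding the trained network a strategy pair that is deterministic at $s$ (unit mass on $a$ and $d$) and reading off the $s$-th entry of the predicted value vector, using the identity $V(s)=\sum_{s'\in{\bf S}}P(s,a,d,s')V(s')=Q(s,a,d)$. Your explicit appeal to acyclicity to argue that the local modification at $s$ leaves the successor values unchanged is a justification the paper leaves implicit, but it is the same underlying argument.
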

    \begin{proof}
Consider a neural network that is trained using enough samples of strategy pairs  to predict the value vector  of the game. Thus given a strategy pair $(p_{\sA}, p_{\sD})$, the neural network predicts $V=\{V(s):s \in {\bf S} \}$.  Consider an arbitrary state $s$. 
The $Q$-value corresponding to state $s$ and action pair $a,d$ for the APT and DIFT, respectively, is given by $Q(s,a,d) = \sum_{s' \in {\bf S}}P(s,a,d,s')V(s')$. 
For a  strategy $(p_{\sA}, p_{\sD})$ with $p_{\sA}(s)$ such that $p_{\sA}(s,a)=1$ and $p_{\sD}(s)$ such that $p_{\sD}(s,d)=1$ gives
\begin{eqnarray*}
V(s) &=& \sum_{s' \in {\bf S}}\sum_{a\in \A_{\sA}}\sum_{d \in \A_{\sD}}p_{\sA}(s,a)p_{\sD}(s,d)P(s,a,d,s')V(s')\\
&=& \sum_{s' \in {\bf S}}P(s,a,d,s')V(s')\\
&=& Q(s,a,d).
\end{eqnarray*}  
Hence  the $Q$-values of the game can be obtained using the neural network that is trained to predict the value vector by inputting a strategy pair $(p_{\sA}, p_{\sD})$ with $p_{\sA}(s,a)=1$ and $p_{\sD}(s,d)=1$, for any  $s \in {\bf S}$, $a \in \A_{\sA}$, and $d \in \A_{\sD}$.
  \end{proof}

Using the trained neural network we run a policy iteration algorithm.  In the policy iteration, we update the strategies of both players, APT and DIFT, by solving the stage games in a dynamic programming manner. The details of the algorithm are presented below.

\begin{algorithm}[h]
  \caption{HSL  Algorithm for APT-DIFT game 
  \label{alg:twostage}}
  \begin{algorithmic}
\State \textit {\bf Input:} APT-DIFT game with state space ${\bf S}$, destination set $\D$,  action sets $\A_{\sA}, \A_{\sD}$, payoff parameter $\beta$
\State \textit{\bf Output:} Value vector $\hat{V}$ and defender strategy $\hat{p}_{\sD}$
\end{algorithmic}
\begin{algorithmic}[1]
\State  Generate random samples of $(p_{\sA}, p_{\sD})$ and value vector\label{step:samples}
\State Train $\F$ using the data set from Step~\ref{step:samples}\label{step:train}

\State Find the topological ordering of the state space graph, $\S$\label{step:top}
\State Obtain the set of nodes corresponding to hierarchical levels $L_1, L_2, \ldots, L_M$\label{step:level}
\State Initialize $V(\tau_{\sA})\leftarrow\beta$, $V(\phi)\leftarrow\beta$, $V(\tau_{\sB})\leftarrow 0$, and $V(s)\leftarrow 0$ for $s \in \D$\label{step:ini-1}
\State Initialize randomly $(\hat{p}^{(M-1)}_{\sA}, \hat{p}^{(M-1)}_{\sD})$\label{step:ini-2}

\For {$k \in \{M-1, M-2, \ldots, 1\} $}
\For{$s\in L_k$}\label{step:s}
\For{$ a \in \A_{\sA}(s), d \in \A_{\sD}(s)$} 
 \State Update $(\hat{p}^{(k)}_{\sA}, \hat{p}^{(k)}_{\sD})$ such that $\hat{p}^{(k)}_{\sA}(s,a)=1$ and $\hat{p}^{(k)}_{\sD}(s,d)=1$
 \State Predict the value vector $\hat{V}$ using the neural network $\F$, $\hat{V}\leftarrow \F(\hat{p}^{(k)}_{\sA}, \hat{p}^{(k)}_{\sD})$
 \State Assign $\hat{Q}(s,a,d) \leftarrow \hat{V}(s)$\label{step:Q}
 \EndFor
 \State Assign $Q(s,a,d) \leftarrow \hat{Q}(s,a,d)$ and solve Problem~\ref{prob:LP} for  $s$ to obtain $V(s)$ and $p_{\sD}(s)$ 
 \State Update $ \hat{p}^{(k)}_{\sD}$ such that $ \hat{p}^{(k)}_{\sD} (s) \leftarrow  p_{\sD}(s) $
  \State Find the action  corresponding to the minimum entry of the vector $\hat{Q}(s,a,d)\,\hat{p}^{(k)}_{\sD} (s, d)$, say $\hat{a}$
   \State Update $ \hat{p}^{(k)}_{\sA}$ such that $ \hat{p}^{(k)}_{\sA} (s, \hat{a}) \leftarrow  1$
 \EndFor
\EndFor
 \Return $\hat{p}_{\sD} \leftarrow \hat{p}^{(k)}_{\sD}$
\end{algorithmic}
\end{algorithm}

Algorithm~\ref{alg:twostage} presents the pseudocode for our HSL algorithm to solve the APT-DIFT game. Initially we generate random samples of strategy pairs $(p_{\sA}, p_{\sD})$ and value vector $V$ and train a neural network. Using the trained neural network, we propose a strategy iteration algorithm. As Assumption~\ref{asm:IFG} holds, the state space graph ${\bf S}$ is acyclic. Thus one can compute the topological ordering (Step~\ref{step:top}) and the hierarchical levels (Step~\ref{step:level}) of  ${\bf S}$ in polynomial time. The values at the absorbing states are known and are hence set as $V(\tau_{\sA})=V(\phi)=\beta$, $V(\tau_{\sB})=0$, and $V(s)=0$ for all $s \in \D$. As the values of the states at level $M$ are known, the algorithm begins with level $M-1$.  Initially the strategies of APT and DIFT are randomly set to $\hat{p}^{(M-1)}_{\sA}$ and $\hat{p}^{(M-1)}_{\sD}$, respectively. 
Then we compute the $Q$-values of all the states using the trained neural network, following the hierarchical order. Due to the hierarchical structure of the state space, computation of $Q$-value of a state in $j^{\rm th}$ hierarchical level depends only on the states that are at levels higher than $j$. Consider  level $M-1$ and an arbitrary state $s$ in level $M-1$. The $Q$-values of $s$ are predicted using the trained neural network by selecting suitable deterministic strategy pairs. That is, for predicting $Q(s,a,d)$, choose strategies such that $\hat{p}^{(M-1)}_{\sA}(s,a)=1$ and $\hat{p}^{(M-1)}_{\sD}(s,d)=1$ as input to the neural network.  Using the $Q$-values of state $s$, solve the LP and obtain value vector, DIFT strategy, and the optimal action of the APT. The strategies $\hat{p}^{(M-1)}_{\sA}$ and $\hat{p}^{(M-1)}_{\sD}$ are updated using the output of the LP such that the $\hat{p}^{(M-1)}_{\sA}(s)$ and $\hat{p}^{(M-1)}_{\sD}(s)$ corresponds to an NE strategy. Once the strategies of both players are updated for all the states in level $M-1$, the process continues for level $M-2$ and so on.

In  HSL algorithm, the prediction of $Q$-values using the neural network (Step~\ref{step:s}-\ref{step:Q}) corresponding to states in a particular level can be run parallel. To elaborate, let there are $x$ number of states in level $L_j$. Then select an action pair $(a,d)$ corresponding to each state and run the prediction step of the algorithm. Thus, every run of the neural network can predict $x$ number of $Q$-values, for $x$ different states.

{\bf Remark}: APT attacks typically consist of multiple stages, e.g., initial compromise, internal reconnaissance, foothold establishment, and data exfiltration, with each stage having a specific set of targets. To capture the multi-stage nature of the attack, we construct a multi-stage IFG, $\G_m$, from the IFG $\G$. Consider an attack that consists of $m$ attack stages with destinations of stage~$j$  denoted as $\D_j$. Then we duplicate $m$ copies of the IFG $\G$ such that nodes in $\D_j$  in the $j^{\rm th}$ copy of $\G$ is connected to respective nodes in $\D_j$ in the $(j+1)^{\rm th}$ copy, for $j \in \{1, \ldots, m\}$.   Also, set $\D_m = \D$. The construction of $\G_m$ guarantees that any path that originate in set $\lambda$ in the first copy and terminate in $\D$ in the $m^{\rm th}$ copy is a feasible attack path. Figure~\ref{fig:multiIFG} shows schematic diagram of a multi-stage IFG.
 For notational brevity the paper presents the single-stage attack case, i.e.,  $m=1$.  All the algorithm and results presented in this paper also apply to the case of multi-stage attack using $\G_m$ as the IFG.
\begin{figure}[t]
	\includegraphics[width=0.48\textwidth]{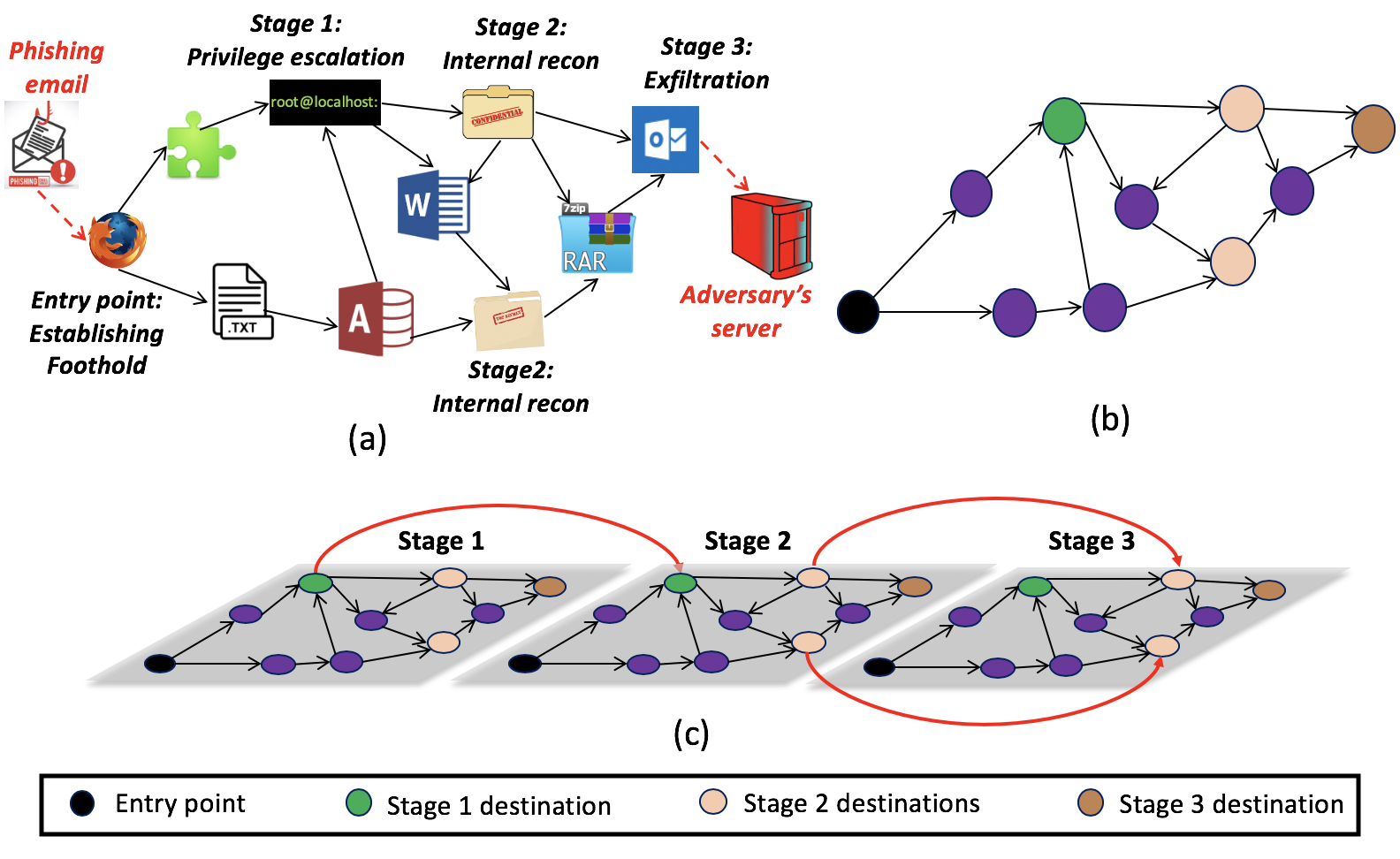}
	\caption{\small An example of a multi-stage APT attack scenario (Figure~\ref{fig:multiIFG}(a)). An illustrative diagram of the IFG $\G$ (Figure~\ref{fig:multiIFG}(b)) and the corresponding multi-stage IFG $\G_m$ (Figure~\ref{fig:multiIFG}(b)) that consists of three stages of the attack, i.e., $m=3$. The propagation of the attack from  stage~$j$ of the attack to stage~$(j+1)$ is captured in $\G_m$ by connecting the destination nodes ${\D}_j$ in stage~$j$ to their respective nodes in stage~$(j+1)$, for $j =1,2$.}\label{fig:multiIFG}
\end{figure}

\section{Simulation}\label{sec:sim}

In this section we test and validate Algorithms~\ref{alg:value},~\ref{alg:value-2}~and~HSL algorithm (Algorithm~\ref{alg:twostage})  on two real world attack datasets. First we provide the details on the attack datasets and explain the construction of IFGs corresponding to each attack from their respective system log data. Then we discuss our experiments and present the results.

\subsection{Attack Datasets}

We use two attack datasets in our experiments. First attack dataset is corresponding to a nation state attack \cite{brenner2009cyberthreats} and the second is related to a ransomware attack \cite{mohurle2017brief}. Each attack was executed individually in a computer running {\em{Linux}} operating system and system logs were recorded through RAIN system \cite{ji2017rain}. System logs contain records related to both malicious and benign information flows.



\subsubsection{Nation State Attack}
Nation state attack (a state-of-the-art APT attack) is a three day adversarial engagement orchestrated by US DARPA red-team during the evaluation of RAIN system. Attack campaign was designed to steal sensitive proprietary and personal information from the victim system. In our experiments we used the system logs recorded during the day 1 of the attack. The attack consists of four key stages: initial compromise, internal reconnaissance,  foothold establishment, and data exfiltration. Our experiments considered the first stage of the attack, i.e., initial compromise stage, where APT used spear-phishing to lead the victim user to a website that was hosting ads from a malicious web server. After navigating to the website, APT exploited a vulnerability in the Firefox browser and compromised the Firefox. 



\subsubsection{Ransomware Attack}
Ransomware attack campaign was designed to block access to the files in $./home$ directory and demand a payment from the victim in exchange for regranting the access. The attack consists of three stages: privilege escalation, lateral movement of the attack, and encrypting and deleting $./home$ directory. Upon reaching the final stage of the attack, APT first opened and read all the files in the $./home$ directory of the victim computer. Then APT wrote the content of the files in $./home$  into an encrypted file named $ransomware.encrypted$. Finally, APT  deleted the $./home$ directory.  

\subsection{Construction of IFG}\label{subsec:prune}
Direct conversion of the system logs into IFG typically results in coarse graphs with large number of nodes and edges as it includes all the information flows recorded during the system execution. Majority of these information flows are related to the system's background processes (noise) which are unrelated to the actual attack campaign and it is computationally intensive to run the proposed algorithms on a state space induced by such a coarse graph. Hence, we use the following pruning steps to prune the coarse graph without losing any attack related causal information flow dependencies. 
\begin{enumerate}
	\item When multiple edges with same directed orientation exist between two nodes in the coarse IFG, combine them to a single directed edge.  For example, consider a scenario where multiple ``read" system calls are recorded between a file and a process.
		This results in multiple edges between the two nodes of the resulting coarse IFG. Our APT-DIFT game formulation only requires to realize the feasibility of transferring information flows between pairs of processes and files. Hence, in scenarios similar to the above  example,  we collapse all the multiple edges between the two nodes in the coarse IFG into a single edge.
	\item Find all the nodes in  coarse IFG  that have at least one information flow path from an entry point of the attack to a target of the attack. When attack consists of multiple stages find all the nodes in coarse IFG that have at least one information flow path from a destination of stage $j$ to a  destination of a stage $j+1$, for all $j \in \{1, \ldots, m-1\}$. 
	\item From coarse graph, extract the subgraph corresponding to the entry points, destinations, and the set of nodes found in Step $2$. 
	\item Group all the nodes that correspond to files of a single directory to a single node related to the  parent file directory. For example, assume the resulting coarse IFG has three files, $./home/inventory/prices.xlsx$, $./home/vendors/contacts/addresses.doc$, and $./home/vendors/ledger.db$,  that are uniquely identified by their respective file paths. In this case we will group all three files, $prices.xlsx$, $addresses.doc$, and $ledger.db$ under one super-node corresponding to the parent directory $./home$.  Incoming and outgoing edges associated with each of the three files are then connected to the new node $./home$. If any subset of the new edges connected to $./home$ induce multiple edges with same orientation to another node in the IFG, then follow step~1). It is also possible to group the files into respective sub-directories such as $./home/inventory/$ and $./home/vendors/$ given in the example. Such groupings will facilitate much finer-grained security analysis at the cost of larger IFGs which require more computation resources to run the proposed APT-DIFT game algorithms presented in this paper.
	\item If the resulting graph after Steps $1$-$4$ contains any cycles use \emph{node versioning} techniques \cite{hossain2018dependence} to remove cycles while preserving the information flow dependencies in the graph.
\end{enumerate}
Steps $2$ and $3$ are done using upstream, downstream, and {\color{black}point-to-point} stream pruning techniques mentioned in \cite{ji2017rain}. The resulting  information flow graph is called  pruned IFG. We tested the proposed algorithms on the state spaces of APT-DIFT games corresponding to these pruned IFGs. 

For the nation state attack, initial conversion of the system logs into an IFG resulted in a coarse graph with 299 nodes and 404 edges which is presented in Figure~\ref{fig:IFG_nationstate}(a).. We used steps $1$ to $4$ explained in Section~\ref{subsec:prune} to obtain a pruned IFG with $30$ nodes and $74$ edges. A network socket connected to an untrusted IP address was identified as an entry point of the attack and the target of the attack is Firefox process. Figure~\ref{fig:IFG_nationstate}(b) shows the pruned IFG of nation state attack. In the IFG , there are 21 information flow paths from the entry point to the target Firefox.

\begin{figure}[t]
	\centering
	{\includegraphics[width=0.475\textwidth]{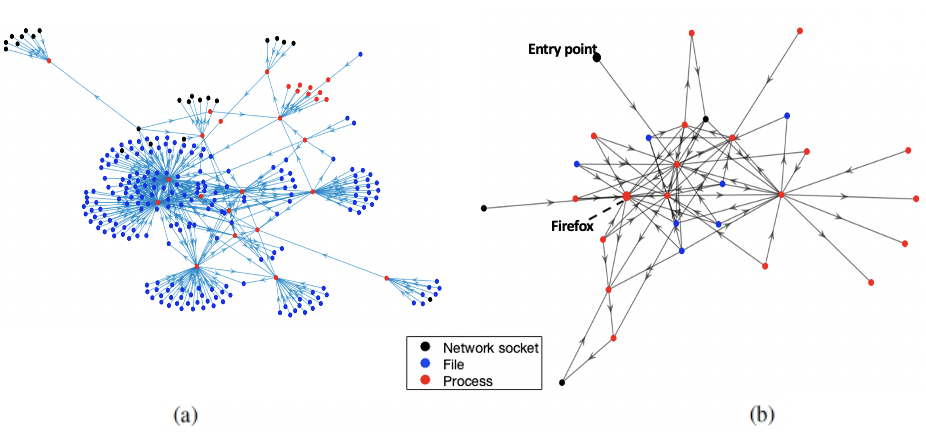}}
	\hspace{\parindent}
	\vspace{-5mm}
	\caption{\small {Relevant attack information of nation state attack: (a)~coarse IFG and (b)~pruned IFG. Nodes of the graph are color coded to illustrate their respective types (network socket, file, and process). A network socket is identified as the entry points of the nation state attack. Target of the attack is {\it Firefox} process.}}\label{fig:IFG_nationstate}
\end{figure}
\begin{figure}[t]
	\centering
	\includegraphics[width=8.5cm]{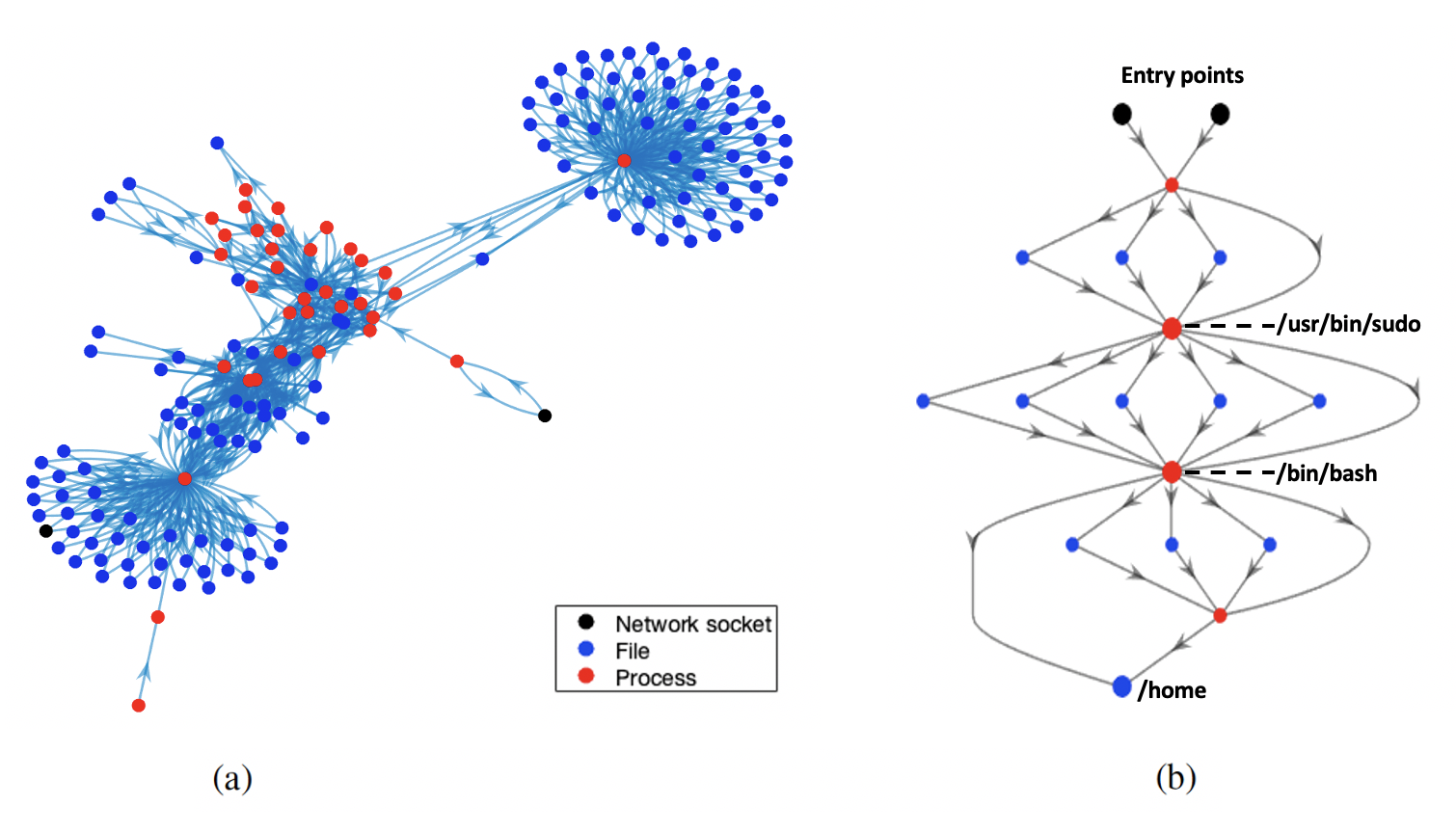}
		\vspace{-3mm}
	\caption{ \small Relevant attack information of ransomware attack: (a)~coarse IFG and (b)~pruned IFG. Nodes of the graph are color coded to illustrate their respective types (network socket, file, and process). Two network sockets are identified as the entry points of the ransomware attack. Targets of the attack ($/usr/bin/sudo, /bin/bash, /home$) are  labeled in the graph.}\label{fig:IFG_ransomware}
			\vspace{-3mm}
\end{figure}
\begin{figure*}[t]
	\centering
		\begin{subfigure}[b]{0.41\textwidth}
	{\includegraphics[width=0.7\textwidth]{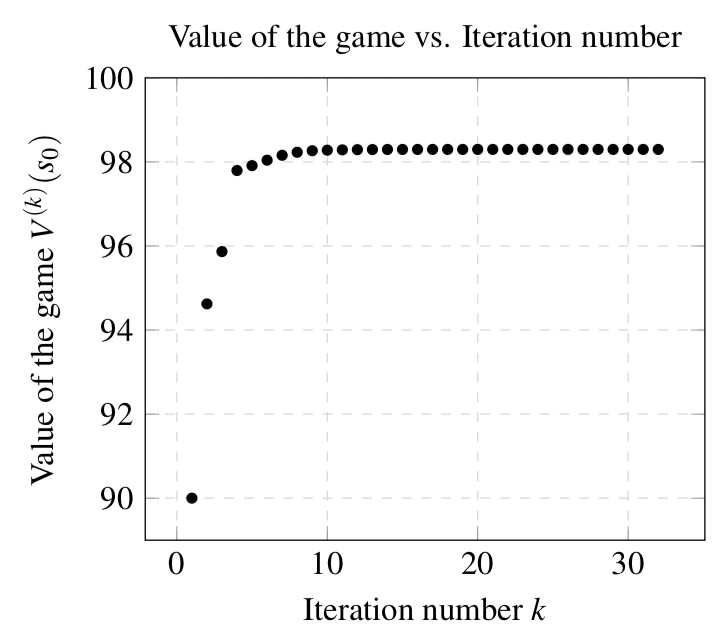}}
		\caption{}\label{fig:value}
	\end{subfigure}~\hspace{1 cm}
	\begin{subfigure}[b]{0.41\textwidth}
		\centering
	{\includegraphics[width=0.7\textwidth]{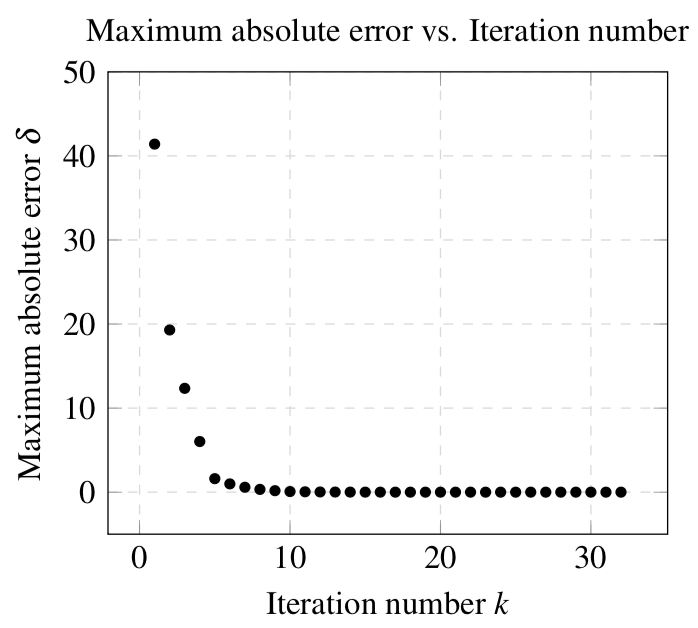}}
		\caption{}\label{fig:error}
	\end{subfigure}~\hspace{0.2 cm}
	\vspace*{-3 mm}
	\caption{\small Figure~\ref{fig:Algo_one}(a) plots the value of APT-DIFT game $V^{(k)}(s_0)$ corresponding to pruned IFG of nation state attack given in Figure~\ref{fig:IFG_nationstate},  computed at iterations $k = 1, 2, \ldots, 32$ in Algorithm~\ref{alg:value}. Figure~\ref{fig:Algo_one}(b) plots the maximum absolute error $\delta= \max_{s \in {\bf S}} |V^{(k)}(s) - V^{(k-1)}(s)|$, for $k = 1, 2, \ldots, 32$. The payoff parameter is set to $\beta = 100$.}\label{fig:Algo_one}
\end{figure*}
\begin{figure*}[t]
	\centering
	\begin{subfigure}[b]{0.41\textwidth}
		\includegraphics[width=8.5cm]{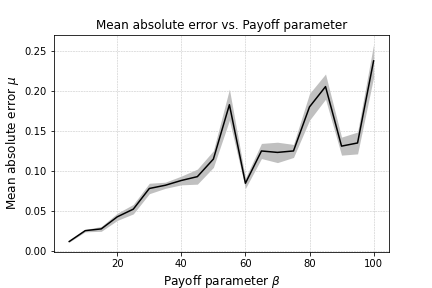}
		\caption{}\label{fig:beta}
	\end{subfigure}~\hspace{1 cm}
\begin{subfigure}[b]{0.41\textwidth}
	\includegraphics[width=8.5cm]{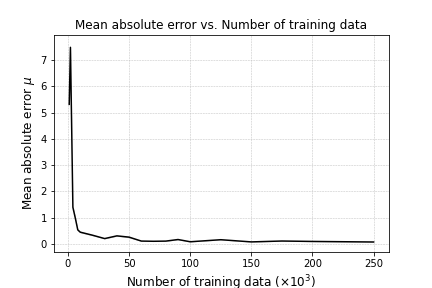}
	\caption{}\label{fig:TD_len}
\end{subfigure}~\hspace{1 cm}
\vspace*{-3 mm}
	\caption{\small Figure~\ref{fig:Algo_Two}(a) shows the mean absolute error $\mu$ between actual value vector $V$ and estimated value vector $\hat{V}$ (using neural network) for different payoff parameter values $\beta = 5, 10, \ldots, 100$. We used $10^{5}$ training data samples in the HSL algorithm (Algorithm~\ref{alg:twostage}). 
		Figure~\ref{fig:Algo_Two}(b) shows variation in $\mu$ with respect to the number of training data samples used in HSL algorithm. In this experiment $\beta = 50$. Each data point in both cases are calculated by averaging results over $10$ independent trials. Bars in each data point shows the standard errors associated with the $\mu$ values obtained in the different trials. Pruned IFG corresponding to the ransomware attack in Figure~\ref{fig:IFG_ransomware} was used in both cases.}\label{fig:Algo_Two}
\end{figure*}

For the ransomware attack, direct conversion of the system logs resulted in a coarse IFG with $173$ nodes and $482$ edges which is presented in Figure~\ref{fig:IFG_ransomware}(a). We pruned the resulting coarse IFG using the steps given in Section~\ref{subsec:prune}. The pruned IFG of ransomware attack consists of $18$ nodes and $29$ edges. Two network sockets that indicate series of communications with external IP addresses in the recorded system logs were identified as the entry points of the attack. Figure~\ref{fig:IFG_ransomware}(b) illustrates the pruned IFG of ransomware attack with annotated targets $/usr/bin/sudo, /bin/bash, /home$.
%


\subsection{Experiments and Results}
Algorithm~\ref{alg:value} was implemented on APT-DIFT game model associated with the pruned cyclic IFG  of nation state attack given in Figure~\ref{fig:IFG_nationstate}.  Figure~\ref{fig:Algo_one}(a) shows the convergence of the value $V^{(k)}(s_0)$ in APT-DIFT game with the iteration number $k$. The threshold value of the maximum absolute error, $\delta$, in Algorithm~\ref{alg:value} was set to $10^{-7}$. We note that $\delta= \max_{s \in {\bf S}} |V^{(k)}(s) - V^{(k-1)}(s)|$ where $k = 1, 2, \ldots$. Figure~\ref{fig:Algo_one}(b) shows that $\delta$ monotonically decreases with $k$. At iteration $k = 32$,  $\delta = 7.79 \times 10^{-8}$.

HSL algorithm (Algorithm~\ref{alg:twostage}) was tested on APT-DIFT game associated with the pruned acyclic IFG of ransomware attack given in Figure~\ref{fig:IFG_ransomware} and results are given in Figure~\ref{fig:Algo_Two}. We used a sequential neural network with two dense layers to learn the value vector for a given strategy pair of APT and DIFT. Each dense layer consists of $1000$ neurons and ReLU activation function. Training data consist tuples of APT and DIFT strategy pair and corresponding value vector of APT-DIFT game.  We generated $1 \times 10^{5}$ training data samples that consist of randomly generated deterministic APT policies. We randomly generated DIFT's policies such that $40\%$ strategies are stochastic (mixed) and $60\%$ are deterministic. For each randomly generated APT and DIFT strategy pair, the corresponding value vector was computed. A stochastic gradient descent optimizer was used to train the neural network. In each experiment trial the neural network  was trained for $100$ episodes and validation error was set to $< 1\%$.


In Figure~\ref{fig:Algo_Two}(a) we analyze the sensitivity of estimated value vector to the variations of payoff parameter, $\beta$ by plotting the mean absolute error $\mu$ between actual value vector $V$ and estimated value vector $\hat{V}$, i.e., $\mu = \sum_{s \in {\bf S}} |V(s) - \hat{V}(s)| / |{\bf S}|$, with respect to $\beta$.  In these experiments $10^{5}$ training data samples were used in HSL algorithm. 
The results show that $\hat{V}$ values are close and consistent with $V$ values when $\beta$ parameter takes smaller values and variations between $\hat{V}$ and $V$ is increased when $\beta$ takes larger values. A reason for this behavior can be the numerical unstability associated with the training and estimating tasks done in the neural network model used in HSL algorithm.
In order to study the effect of the length of training data samples on the estimated value vector we plot $\mu$ against number of training data samples used in HSL algorithm in Figure~\ref{fig:Algo_Two}(b). $\beta = 50$ was used in the experiments. The results show that $\mu$ decreases with the number of training data samples used in HSL algorithm as the increased number of training data samples improves the learning in  neural network model.



\section{Conclusion}\label{sec:end}
This paper studied detection of Advanced Persistent Threats (APTs) using a  Dynamic Information Flow Tracking (DIFT)-based detection mechanism. We modeled the strategic interaction between the APT and DIFT as a non-cooperative stochastic game with total reward structure. The APT-DIFT game has  imperfect information as both APT and DIFT are unaware of the actions of the opponent. Also, our game model incorporates the false-positives  and false-negatives generated by DIFT.  We considered two scenarios of the game~(i)~ when the transition probabilities, i.e., rates of false-positives  and false-negatives, are known to both players and (ii)~ when the transition probabilities are unknown to both players. For case~(i), we proposed a value iteration-based algorithm and proved convergence and complexity. For case~(ii), we proposed  Hierarchical Supervised Learning (HSL),  a supervised learning-based algorithm that utilizes the hierarchical structure of the state space of the APT-DIFT game, that integrates a neural network with a policy iteration algorithm to compute an approximate equilibrium of the game. We implemented our algorithms on real attack datasets for nation state and ransomware attacks collected using the RAIN framework and validated our results and performance.



\bibliographystyle{abbrv}
\bibliography{DIFT-Reference}

\section{Appendix}

{\em Proof~of~Proposition~\ref{prop:NE}}:
It is known that the class of zero-sum, finite, stochastic games with nonzero stopping (termination) probability has  Nash equilibrium in stationary strategies \cite{shapley}. Note that, the APT-DIFT game is a stochastic game with finite state space and finite action spaces for both players. Moreover, the transition probabilities in the game $FP(\cdot)$ and $FN(\cdot)$ are such that $0 < FP(\cdot)<1$ and $0< FN(\cdot)< 1$.
Thus the stopping probabilities of the APT-DIFT game are nonzero, for all strategy pairs $(p_{\sA}, p_{\sD})$ except for a deterministic policy in which the defender does not perform security analysis at any state. However, such a policy is trivially irrelevant to the game as the defender is idle and essentially not participating in the game.   As the result for zero-sum games also hold for constant-sum games,  from \cite{shapley} it follows that there exists an NE for the APT-DIFT game.
\qed

\end{document}